\documentclass[12pt,reqno]{amsart}
\usepackage[left=1.25in,right=1.25in,top=1.25in,bottom=1.25in]{geometry}
\usepackage{graphicx}
\usepackage{color}
\usepackage[dvipsnames]{xcolor}
\usepackage{amsmath, amsthm, amssymb, amsfonts}
\usepackage{natbib}
\usepackage{har2nat}
\usepackage{tabularx}
\usepackage{caption}
\usepackage{parskip}
\usepackage[british]{babel}

\setcounter{MaxMatrixCols}{10}
\newtheorem{theorem}{Theorem}

\newtheorem{lemma}{Lemma}
\newtheorem{proposition}{Proposition}
\theoremstyle{definition}

\newcommand{\ul}{\underline}
\newcommand{\ol}{\overline}
\newcommand{\df}{\mathrm{d}}

\newcommand{\bdis}{\begin{displaymath}}
\newcommand{\edis}{\end{displaymath}}
\newcommand{\beq}{\begin{equation}}
\newcommand{\eeq}{\end{equation}}
\newcommand{\bea}{\begin{eqnarray*}}
\newcommand{\eea}{\end{eqnarray*}}
\newcommand{\bean}{\begin{eqnarray}}
\newcommand{\eean}{\end{eqnarray}}

\newcommand{\R}{\mathbb{R}}
\newcommand{\E}{\mathbb{E}}

\DeclareMathOperator*{\argmax}{arg\,max}

\usepackage[colorlinks=true,allcolors=Blue,linkcolor=Blue]{hyperref}

\begin{document}

\let\MakeUppercase\relax

\title{Persuasion Meets Delegation}
\author[\uppercase{Kolotilin and Zapechelnyuk}]{
\larger \textsc{Anton Kolotilin and Andriy
Zapechelnyuk}}
\date{\today}
\thanks{ \ \\
\textit{Kolotilin}: School of Economics, UNSW Business School,
Sydney, NSW 2052, Australia. {\it E-mail:} {akolotilin@gmail.com}. \\
\textit{Zapechelnyuk}: School of Economics and Finance, University of St Andrews, Castlecliffe, the Scores, St Andrews KY16 9AR, UK. {\it E-mail:} {az48@st-andrews.ac.uk.} \\ 
\ \\
We are grateful to Tymofiy Mylovanov, with whom we are working on related projects. We thank Ricardo Alonso, Kyle Bagwell, Benjamin Brooks, Deniz Dizdar, Piotr Dworczak, Alexander Frankel, Drew Fudenberg, Gabriele Gratton, Yingni Guo, Emir Kamenica, Navin Kartik, Ming Li, Hongyi Li, Carlos Oyarzun, Alessandro Pavan, Eric Rasmusen, Philip Reny, Joel Sobel, and Thomas Tr\"{o}ger for helpful comments and suggestions. We also thank participants at various seminars and conferences. Part of this research was carried out while Anton Kolotilin was visiting MIT Sloan School of Management, whose hospitality and support is greatly appreciated. Anton Kolotilin also gratefully acknowledges support from the Australian Research Council Discovery Early Career Research Award DE160100964.  Andriy Zapechelnyuk gratefully acknowledges support from the Economic and Social Research Council Grant ES/N01829X/1.
}

\begin{abstract} 
A principal can restrict an agent's information (the persuasion problem) or restrict an agent's discretion (the delegation problem). We show that these problems are generally equivalent --- solving one solves the other.
We use tools from the persuasion literature to generalize and extend many results in the delegation literature, as well as to address novel delegation problems, such as monopoly regulation with a participation constraint.

\bigskip

\noindent\emph{JEL\ Classification:}\ D82, D83, L43\newline

\noindent\emph{Keywords:} persuasion, delegation, regulation

\

\end{abstract}

\maketitle

\newpage

\section{Introduction}\label{Intro}
There are two ways to influence decision making: delegation and persuasion.
The delegation literature, initiated by \citeasnoun{Holmstrom}, studies the design of decision rules. The persuasion literature, set in motion by \citeasnoun{KG}, studies the design of information disclosure rules. 

The delegation problem has been used to design organizational decision processes \cite{Dessein}, monopoly regulation policies \cite{AM}, and international trade agreements \cite{AB}. The persuasion problem has been used to design school grading policies \cite{OS}, internet advertising strategies \cite{RS}, and forensic tests \cite{KG}.

This paper shows that, under general assumptions, the delegation and persuasion problems are equivalent, thereby bridging the two strands of literature. The implication is that the existing insights and results in one problem can be used to understand and solve the other problem.

Both delegation and persuasion problems have a principal and an agent whose payoffs depend on an agent's decision and a state of the world. The sets of decisions and states are intervals of the real line. The agent's payoff function satisfies standard single-peakedness and sorting conditions. In a delegation problem, the agent privately knows the state and the principal commits to a set of decisions from which the agent chooses. In a persuasion problem, the principal designs the agent's information structure and the agent freely chooses a decision.
The principal's tradeoff is that giving more discretion to the agent in the delegation problem and disclosing more information to the agent in the persuasion problem allows for better use of information about the state, but limits control over the biased agent's decision.

We consider {\it balanced delegation} and {\it monotone persuasion} problems. In the balanced delegation problem, the principal may not be able to exclude certain indispensable decisions of the agent. This problem nests the standard delegation problem and includes, in particular, a novel delegation problem with an agent's participation constraint.

In the monotone persuasion problem, the principal chooses a monotone partitional information structure that either reveals the state or pools it with adjacent states. This problem incorporates constraints faced by information designers in practice. For example, a non-monotone grading policy that gives better grades to  worse performing students will be perceived as unfair and will be manipulated by strategic students. Moreover, in many special cases, optimal information structures are monotone partitions.

The main result of the paper is that the balanced delegation and monotone persuasion problems are strategically equivalent. For each primitive of one problem we explicitly construct an equivalent primitive of the other problem. This construction equates the marginal payoffs and swaps the roles of decisions and states in the two problems. Intuitively, decisions in the delegation problem play the role of states in the persuasion problem because the principal controls decisions in the delegation problem and (information about) states in the persuasion problem. It is worth noting that this equivalence result is fundamentally different from the revelation principle. Specifically, the sets of implementable (and, therefore, optimal) decision outcomes generally differ in the delegation and persuasion problems with the same payoff functions.

To prove the equivalence result, we show that the balanced delegation and monotone persuasion problems are equivalent to the following {\it discriminatory disclosure} problem. The principal's and agent's payoffs depend on an agent's binary action, a state of the world, and an agent's private type. The sets of states and types are intervals of the real line. The agent's payoff function is single-crossing in the state and type. The principal designs a menu of cutoff tests, where a cutoff test discloses whether the state is below or above a cutoff. The agent selects a test from the menu and chooses between {\it inaction} and {\it action} depending on his private type and the information revealed by the test.

To see why the discriminatory disclosure problem is equivalent to the balanced delegation problem, observe that the agent's essential decision is the selection of a cutoff test from the menu. Because the agent's payoff function is single-crossing in the state, the agent optimally chooses inaction/action if the selected test discloses that the state is below/above the cutoff. Thus, this problem can be interpreted as a delegation problem in which a delegation set is identified with a menu of cutoffs, and the agent's decision with his selection of a cutoff from the menu.

To see why the discriminatory disclosure problem is equivalent to the monotone persuasion problem, observe that each menu of cutoff tests defines a monotone partition of the state space. 
Because the agent's payoff function is single-crossing in the state, the agent's optimal choice between inaction and action is the same whether he observes the partition element that contains the state or the result of the optimally selected cutoff test.
Moreover, because the agent's payoff function is single-crossing in his type, the agent optimally chooses inaction/action if his type is below/above a threshold. Thus, this problem can be interpreted as a persuasion problem in which a monotone partition is identified with a menu of cutoffs, and the agent's decision with a threshold type.

We use our equivalence result to solve a monopoly regulation problem in which a welfare-maximizing regulator (principal) restricts the set of prices available to a monopolist (agent) who privately knows his cost. 
This problem was first studied by \citeasnoun{BM82} as a mechanism design problem with transfers. \citeasnoun{AM} pointed out that transfers between the regulator and monopolist are often forbidden, and thus, the monopoly regulation problem can be formulated as a delegation problem. \citeasnoun{AM} omitted the monopolist's participation constraint, so under their optimal regulation policy, the monopolist sometimes operates at a loss. \citeasnoun{AB2} characterized the optimal regulation policy taking the participation constraint into account.

The monopoly regulation problem, with and without the participation constraint, can be formulated as a balanced delegation problem. 
We provide an elegant method of solving this problem, by recasting it as a monotone persuasion problem and using a single result from the persuasion literature. When the demand function is linear and the cost distribution is unimodal, the optimal regulation policy takes a simple form that is often used in practice. The regulator imposes a price cap and allows the monopolist to choose any price not exceeding the cap. The optimal price cap is higher when the participation constraint is present; so the monopolist is given more discretion when he has an option to exit.

The literature has focused on {\it linear} delegation and {\it linear} persuasion in which the marginal payoffs are linear in the decision and the state, respectively. We show the equivalence of linear balanced delegation and linear monotone persuasion. We translate a linear delegation problem to the equivalent linear persuasion problem and solve it using methods in \citeasnoun{Kolotilin2017} and \citeasnoun{DM}. Specifically, we provide conditions under which a candidate delegation set is optimal. For an interval delegation set, these conditions coincide with those in \citeasnoun{AM}, \citeasnoun{AB}, and \citeasnoun{ABF}, but we impose weaker regularity assumptions. For a two-interval delegation set, our conditions are novel and imply special cases in \citeasnoun{MS1991} and \citeasnoun{AM}.

Our equivalence result can also be used to translate the existing results in nonlinear delegation problems to equivalent nonlinear persuasion problems, and vice versa. Nonlinear delegation is considered in \citeasnoun{Holmstrom}, \citeasnoun{AM}, \citeasnoun{AB}, and \citeasnoun{ABF}. Nonlinear persuasion is considered in \citeasnoun{RS}, \citeasnoun{KG}, \citeasnoun{Kolotilin2017}, \citeasnoun{DM}, and \citeasnoun{GS2018}.

The rest of the paper is organized as follows. In Section \ref{Model}, we define the balanced delegation and monotone persuasion problems. In Section \ref{s:equiv}, we present and prove the equivalence result. In Section \ref{s:applic}, we apply the equivalence result to solve a monopoly regulation problem. In Section \ref{s:LDLP}, we address the linear delegation problem using tools from the persuasion literature. In Section \ref{s:proof-ext}, we present the equivalence result under weaker assumptions. In Section \ref{s:conc}, we make concluding remarks. The appendix contains omitted proofs.

\section{Two Problems}\label{Model}

\subsection{Primitives}
There are a principal (she) and an agent (he). The agent's payoff $U(\theta,x)$ and principal's payoff $V(\theta,x)$ depend on a decision $x\in[0,1]$ and a state $\theta\in[0,1]$. The state is uniformly distributed. We assume that the marginal payoffs are continuous, and the agent's payoff is supermodular and concave in the decision,\footnote{We relax these assumptions in Section~\ref{s:proof-ext}.}

(A$_1$) $\frac{\partial}{\partial x} U(\theta,x)$ and $\frac{\partial}{\partial x} V(\theta,x)$ are well defined and continuous in $\theta$ and $x$;

(A$_2$) $\frac{\partial}{\partial x} U(\theta,x)$ is strictly increasing in $\theta$ and strictly decreasing in $x$.

A pair $(U,V)$ is called a {\it primitive} of the problem. Let $\mathcal P$ be the set of all primitives that satisfy assumptions (A$_1$) and (A$_2$).

We now describe two problems. 
In a delegation problem, the agent is fully informed and the principal restricts the agent's discretion. In a persuasion problem, the agent has full discretion and the principal restricts the agent's information.

In both problems, the principal chooses a closed subset $\Pi$ of $[0,1]$ that contains the elements $0$ and $1$. Let 
\[
{\bf \Pi}=\{\Pi\subset [0,1]: \text{$\Pi$ is closed and $\{0,1\}\subset \Pi$}\}.
\]
In the delegation problem, $\Pi$ describes a set of decisions from which the agent chooses. In the persuasion problem, $\Pi$ describes a partition of states that the agent observes.

\subsection{Balanced Delegation Problem.}\label{s:deleg}

Consider a primitive $(U_D,V_D)\in\mathcal P$, where we use subscript $D$ to refer to the delegation problem.  
The principal chooses a {\it delegation set} $\Pi\in \bf \Pi$. 
The agent privately observes the state $\theta$ and chooses a decision $x\in\Pi$ that maximizes his payoff,
\addtocounter{equation}{1}
\begin{equation}\label{E:x}
	x^*_D(\theta,\Pi)=\argmax\limits_{x\in \Pi}  U_D(\theta,x).\tag{\arabic{equation}a}
\end{equation}
The principal's objective is to maximize her expected payoff, 
\begin{equation*}\label{E:D}
\max\limits_{\Pi\in\bf\Pi} \E\big[V_D(\theta,x^*_D(\theta,\Pi))\big].\tag{\arabic{equation}b}
\end{equation*}
By (A$_2$), $x^*_D(\theta,\Pi)$ is single-valued for almost all $\theta$, so $\E\big[V_D(\theta,x^*_D(\theta,\Pi))\big]$ is well~defined.

The balanced delegation problem requires delegation sets to include the extreme decisions. 
On the one hand, this requirement can be made non-binding by defining the agent's and principal's payoffs on a sufficiently large interval of decisions so that the extreme decisions are never chosen (Appendix \ref{s:StdDel}). On the other hand, this requirement allows to include indispensable decisions of the agent, such as a participation decision (Sections \ref{s:applic} and \ref{s:linpart}).

\subsection{Monotone Persuasion Problem.}\label{S:monp}
Consider a primitive $(U_P,V_P)\in\mathcal P$, where we use subscript $P$ to refer to the persuasion problem.
The principal chooses a monotone partitional information structure that partitions the state space into convex sets: separating elements and pooling intervals.
A monotone partition is described by a set $\Pi\in {\bf \Pi}$ of boundary points of these partition elements. Let
\[
\ul\pi(\theta)=\sup \{\theta'\in\Pi:\theta'\le \theta\} \quad\text{and}\quad \ol\pi(\theta)=\inf \{\theta'\in\Pi:\theta'>\theta\}
\]
for $\theta\in [0,1)$, and $\ul \pi (1) = \ol \pi (1) =1$.
The partition element $\mu_\Pi(\theta)$ that contains $\theta\in[0,1]$ is given by
\[
\mu_\Pi(\theta)=\begin{cases}
\{\theta\},& \text{if $\ul\pi(\theta)=\ol\pi(\theta)$},\\
[\ul\pi(\theta),\ol\pi(\theta)),& \text{if $\ul\pi(\theta)<\ol\pi(\theta)$}.
\end{cases}
\]
For example, $\Pi=\{0,1\}$ is the uninformative partition that pools all states,\footnote{Formally, state $\theta=1$ is separated, but this is immaterial, because this event has zero probability.} and $\Pi=[0,1]$ is the fully informative partition that separates all states. 

The agent observes the partition element $\mu_\Pi(\theta)$ that contains the state $\theta$ and chooses a decision $x\in [0,1]$ that maximizes his expected payoff given the posterior belief about $\theta$,
\addtocounter{equation}{1}
\begin{equation}\label{E:y}
x^*_P(\theta,\Pi)\in \argmax\limits_{x\in [0,1]}\E\big [U_P(\theta',x) \big| \theta'\in \mu_\Pi(\theta)\big ].\tag{\arabic{equation}a}
\end{equation}
The principal's objective is to maximize her expected payoff, 
\begin{equation*}\label{E:P}
\max\limits_{\Pi\in\bf\Pi} \E\big[V_P(\theta,x^*_P(\theta,\Pi))\big].\tag{\arabic{equation}b}
\end{equation*}
By (A$_2$), $x^*_P(\theta,\Pi)$ is single-valued for all $\theta$, so $\E\big[V_P(\theta,x^*_P(\theta,\Pi))\big]$ is well defined.

The monotone persuasion problem requires information structures to be monotone partitions. On the one hand, this requirement is without loss of generality in many special cases, where optimal information structures are monotone partitions (Section~\ref{s:LDLP}). On the other hand, this requirement may reflect incentive and legal constraints faced by information designers. Monotone partitional information structures are widespread and include, for example, school grades, tiered certification, credit and consumer ratings.

\subsection{Persuasion versus Delegation}\label{S:Example} We now show that implementable outcomes differ in the persuasion and delegation problems with the same primitive $(U,V)\in \mathcal P$.

In the persuasion problem with $U(\theta,x)=-(\theta-x)^2$, consider a monotone partition $\Pi'$ that reveals whether the state is above or below $1/3$. Since $\theta$ is uniformly distributed on $[0,1]$, the induced decision of the agent is
\begin{equation*}\label{E:ex1}
x^*_P(\theta,\Pi')=\begin{cases}
\frac 1 6, &\text{if $\theta\in\left(0,\frac 1 3\right)$},\\
\frac 2 3,& \text{if $\theta \in \left ( \frac 1 3,1\right)$},
\end{cases}
\end{equation*}
where 1/6 is the midpoint between 0 and 1/3, and 2/3 is the midpoint between 1/3 and 1.

This outcome cannot be implemented in the delegation problem with the same primitive $U(\theta,x)=-(\theta-x)^2$. To see this, consider a delegation set $\Pi''$ that permits only two decisions, $1/6$ and $2/3$. The induced decision of the agent is
\begin{equation*}\label{E:ex2}
x^*_D(\theta,\Pi'')=\begin{cases}
\frac 1 6, & \text{if $\theta\in\left(0,\frac 5 {12}\right)$},\\
\frac 2 3,& \text{if $\theta\in \left( \frac 5 {12},1\right)$},
\end{cases}
\end{equation*}
where $ 5/12$ is the midpoint between $1/6$ and $2 /3$. Thus, the induced decisions in the persuasion and delegation problems differ on the interval of intermediate states between $1/3$ and $5/12$.\footnote{Since the outcome $x^*_P(\cdot,\Pi')$ cannot be implemented in the delegation problem, it cannot be implemented in the balanced delegation problem.}

Note that the outcome $x^*_P(\cdot,\Pi')$ is the first best for the principal whose payoff $V(\theta,x)$ is maximized at $x=1/6$ for states below $1/3$ and is maximized at $x=2/3$ for states above $1/3$. This first best is not implementable in the delegation problem with the same primitive. Conversely, the outcome $x^*_D(\cdot,\Pi'')$ is the first best for the principal whose payoff $V(\theta,x)$ is maximized at $x=1/6$ for states below $5/12$ and is maximized at $x=2/3$ for states above $5/12$. This first best is not implementable in the persuasion problem with the same primitive.\footnote{Similarly, if the principal's payoff $V(\theta,x)$ is maximized at decision $0$ for states below $1/2$ and is maximized at decision 1 for states above $1/2$, then the first best is implementable by the balanced delegation set $\{0,1\}$, but is not implementable by any monotone partition.}

\section{Equivalence}\label{s:equiv}

\subsection{Main Result\label{s:mr}} We use \citename{vnm}'s \citeyear{vnm} notion of strategic equivalence.
Primitives $(U_D, V_D)\in\mathcal P$ and $(U_P,V_P)\in\mathcal P$ of the balanced delegation and monotone persuasion problems are {\it equivalent} if there exist $\alpha>0$ and $\beta\in\R$ such that
\begin{equation*}
\E\big[V_D(\theta,x^*_D(\theta,\Pi))\big]=\alpha \E\big[V_P(\theta,x^*_P(\theta,\Pi))\big]+\beta \quad \text{for all $\Pi\in\bf\Pi$}.
\end{equation*}
That is, if $(U_D,V_D)$ and $(U_P, V_P)$ are equivalent, then, in both problems, the principal gets the same expected payoff, up to an affine transformation, for each $\Pi$; consequently, the principal's optimal solution is also the same.

\addtocounter{equation}{1}
\begin{theorem}\label{T:1}
For each $(U_D,V_D)\in \mathcal P$,  an equivalent  $(U_P, V_P)\in \mathcal P$ is given by
\begin{align}\label{E:DtoP}
U_P(\theta,x)= -\int_0^x  \left.\frac{\partial U_D(t,s)}{\partial s}\right|_{s=\theta}\df t \quad\text{and}\quad V_P(\theta,x)= -\int_0^x  \left.\frac{\partial V_D(t,s)}{\partial s}\right|_{s=\theta}\df t.\tag{\arabic{equation}a}
\end{align}
Conversely, for each $(U_P,V_P)\in \mathcal P$,  an equivalent  $(U_D, V_D) \in \mathcal P$ is given by
\begin{align}\label{E:PtoD}
U_D(\theta,x)= \int_x^1  \left.\frac{\partial U_P(s,t)}{\partial t}\right|_{t=\theta}\df s \quad\text{and}\quad V_D(\theta,x)= \int_x^1  \left.\frac{\partial V_P(s,t)}{\partial t}\right|_{t=\theta}\df s.\tag{\arabic{equation}b}
\end{align}
\end{theorem}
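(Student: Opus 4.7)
I will prove the forward construction \eqref{E:DtoP}; the converse \eqref{E:PtoD} then follows by checking that composing the two formulas returns the starting payoffs up to an additive function of $\theta$ alone, which changes neither the agent's optimal decision nor the principal's expected payoff. Set $b(\theta,x):=\partial V_D(\theta,x)/\partial x$. Two preliminary steps are mechanical. First, verify $(U_P,V_P)\in\mathcal P$ by differentiating under the integral sign, which yields $\partial U_P(\theta,x)/\partial x = -\partial U_D(x,s)/\partial s|_{s=\theta}$---continuous, strictly increasing in $\theta$, and strictly decreasing in $x$ by (A$_1$)--(A$_2$)---together with $\partial V_P(\theta,x)/\partial x = -b(x,\theta)$, continuous. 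Second, use the fundamental theorem of calculus in $x$ together with Fubini to rewrite both expected payoffs as area integrals of the single kernel $b$ over subsets of $[0,1]^2$:
\[
\E[V_D(\theta,x^*_D(\theta,\Pi))] = \int_0^1 V_D(\theta,0)\,\df\theta + \iint_{R_D(\Pi)} b(\theta,x)\,\df\theta\,\df x,
\]
with $R_D(\Pi)=\{(\theta,x)\in[0,1]^2 : x\leq x^*_D(\theta,\Pi)\}$, and, using $V_P(\theta,0)=0$ and relabelling the dummy integration variables,
\[
\E[V_P(\theta,x^*_P(\theta,\Pi))] = -\iint_{\tilde R_P(\Pi)} b(\theta,x)\,\df\theta\,\df x,\quad \tilde R_P(\Pi):=\{(\theta,x)\in[0,1]^2 : \theta\leq x^*_P(x,\Pi)\}.
\]

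The heart of the argument is the third step: showing that $R_D(\Pi)$ and $\tilde R_P(\Pi)$ partition $[0,1]^2$ up to Lebesgue measure zero. Fix $x$ avoiding the (measure-zero) boundary of $\Pi$. If $x$ lies in an open gap $(c,d)$ between consecutive elements of $\Pi$, the persuasion first-order condition for the pooling interval $[c,d)$ reads $U_D(x^*_P,c)=U_D(x^*_P,d)$, so $x^*_P(x,\Pi)$ equals the indifference state $\tau_{c,d}$ that also separates decisions $c$ and $d$ in the delegation problem; since $x^*_D(\theta,\Pi)\in\Pi$ cannot land inside $(c,d)$, (A$_2$) (strict concavity in $x$ together with strict supermodularity) forces $x^*_D(\theta,\Pi)\geq d > x$ iff $\theta > \tau_{c,d}$. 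If instead a neighborhood of $x$ lies in $\Pi$, then $\mu_\Pi(x)=\{x\}$, the persuasion first-order condition becomes $\partial U_D(x^*_P,x)/\partial x = 0$, and strict monotonicity of $\partial U_D/\partial x$ in $\theta$ makes $\theta\geq x^*_P(x,\Pi)$ equivalent to $\partial U_D(\theta,x)/\partial x\geq 0$; a brief sub-case check (either the unconstrained optimum $z^*(\theta)$ lies in this $\Pi$-piece, or it lies above/below it, pushing $x^*_D$ to the appropriate boundary and beyond) shows this is exactly the condition $x^*_D(\theta,\Pi)\geq x$.

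Combining the two pieces, $\iint_{R_D} b = \iint_{[0,1]^2} b - \iint_{\tilde R_P} b$, so $\E[V_D]-\E[V_P]$ collapses to the $\Pi$-independent constant $\int_0^1 V_D(\theta,1)\,\df\theta$, giving equivalence with $\alpha=1$. The main obstacle is the third step, particularly the ``$x$ interior to a continuous piece of $\Pi$'' sub-case: the delegation optimum may be driven to a boundary of that piece rather than being interior, and one must still check that the sign of $\partial U_D(\theta,x)/\partial x$ records correctly on which side of $x$ the delegation choice $x^*_D(\theta,\Pi)$ lands. The remaining steps are routine applications of the fundamental theorem of calculus and the monotone comparative statics built into (A$_2$).
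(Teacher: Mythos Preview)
Your argument is correct in substance and shares the paper's mathematical core: rewrite each expected payoff as a double integral of the marginal-payoff kernel over a best-response region, then show the delegation and persuasion regions are complementary in $[0,1]^2$. The paper's route differs in two ways. First, it organizes the proof around an intermediary \emph{discriminatory disclosure} problem (a menu of cutoff tests offered to a privately-typed agent with a binary action) and shows that both balanced delegation and monotone persuasion are reformulations of it; equivalence then falls out without directly comparing the two regions. Second, in the formal argument the paper works with the partition-averaged kernel $u_\Pi(s,t)=\E[\partial U_P(s',t)/\partial t\mid s'\in\mu_\Pi(s)]$ and identifies both best-response regions with $\{u_\Pi\geq 0\}$ via single crossing, bypassing your gap/interior case split entirely. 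Your direct approach is more elementary and makes the geometry transparent; the paper's buys a case-free proof that extends immediately to the weaker single-crossing hypotheses of Theorem~\ref{T:2}. One correction to your write-up: the topological boundary of a closed $\Pi\subset[0,1]$ can have positive Lebesgue measure (take a fat Cantor set), so ``avoid the measure-zero boundary'' is not quite right---but your second case actually only needs $x\in\Pi$ with $\mu_\Pi(x)=\{x\}$, not a full neighborhood of $x$ inside $\Pi$, and the remaining exceptional points (left endpoints of gaps) are countable, so the patch is immediate.
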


Theorem \ref{T:1} explicitly connects the two problems. In fact, we prove a stronger result than Theorem \ref{T:1}.  We show that if primitives $(U_D, V_D)\in\mathcal P$ and $(U_P,V_P)\in\mathcal P$ satisfy
\[%
\left.\frac{\partial U_D(\theta_D,x)} {\partial x}\right|_{x=\theta_P}=-\left.\frac{\partial U_P(\theta_P,x) }{\partial x}\right|_{x=\theta_D} \ \ \text{and} \ \
\left.\frac{\partial V_D(\theta_D,x)} {\partial x}\right|_{x=\theta_P}=-\left.\frac{\partial V_P(\theta_P,x) }{\partial x}\right|_{x=\theta_D}
\]
for all $\theta_D,\theta_P\in[0,1]$, then, in both problems, not only the principal but also the agent gets the same ex-ante expected payoff, up to a constant, for each $\Pi$.
This connection equates the marginal payoffs from a decision and swaps the roles of decisions and states in the two problems. In particular, the agent's payoff is supermodular in one problem whenever it is concave in the decision in an equivalent problem.

The rest of this section proves the equivalence of the balanced delegation and monotone persuasion problems, by showing that they are equivalent to a discriminatory disclosure problem. In this problem, the agent is privately informed and has a binary action. The principal designs a menu of cutoff tests, where a cutoff test discloses whether the state is below or above a cutoff. The agent selects a test from the menu and then chooses an action. 

\subsection{Discriminatory Disclosure Problem}
The agent chooses between actions $a=0$ and $a=1$. The agent's payoff $u(s,t)$ and principal's payoff $v(s,t)$ from $a=1$ depend on a state $s\in [0,1]$ and an agent's private type $t\in [0,1]$; the payoffs from $a=0$ are normalized to zero. The state and type are independently and uniformly distributed. We assume that:

(A$'_1$) $u(s,t)$ and $v(s,t)$ are continuous in $s$ and $t$;

(A$'_2$) $u(s,t)$ is strictly increasing in $s$ and strictly decreasing in $t$.

A pair $(u,v)$ that satisfies assumptions (A$'_1$) and (A$'_2$) is a primitive of this problem. 

The principal  designs a menu $\Pi\in \bf \Pi$ of cutoff tests. Each cutoff test $y\in \Pi$ discloses whether the state $s$ is at least $y$. 
The agent knows his private type $t$, selects a cutoff test $y$ from the menu $\Pi$, observes the result of the selected test, and then chooses between $a=0$ and $a=1$. 

\subsection{Equivalence to Balanced Delegation.}
Consider a discriminatory disclosure problem with a primitive $(u,v)$. A menu of cutoffs $\Pi$ can be interpreted as a delegation set, and the agent's selection of a cutoff $y\in \Pi$ as an agent's decision.   
Indeed, by (A$'_2$), the agent gets a higher payoff from $a=1$ when the state $s$ is higher; so either he optimally chooses $a=1$ whenever $s\ge y$, or makes a choice irrespective of the test result. But ignoring the test result is the same as selecting an uninformative test $y\in \{0,1\}\subset \bf \Pi$, and then choosing $a=1$ whenever $s\ge y$. Therefore, without loss of generality, after observing the result of the selected test, $s\geq y$ or $s<y$, the agent chooses $a=1$ if $s\ge y$ and $a=0$ if $s<y$.

Thus, the agent selects a test $y\in\Pi$ that maximizes his expected payoff,
\addtocounter{equation}{1}
\beq\label{E:PPXX}
y^*(t,\Pi)=\argmax_{y\in \Pi} \int_{y}^{1}u(s,t) \df s.\tag{\arabic{equation}a}
\eeq
The principal's objective is to maximize her expected payoff,
\begin{equation}\label{E:PPI}
\max\limits_{\Pi\in\bf\Pi} \E\left[\int_{y^*(t,\Pi)}^{1}v(s,t) \df s\right].\tag{\arabic{equation}b}
\end{equation}
By (A$'_2$), $y^*(t,\Pi)$ is single-valued for almost all $t$, so $\E\big[\int_{y^*(t,\Pi)}^{1}v(s,t) \df s\big]$ is well~defined.

The discriminatory disclosure problem \eqref{E:PPXX}--\eqref{E:PPI} is a balanced delegation problem \eqref{E:x}--\eqref{E:D} with 
\[
U_D(\theta,x)= \int_{x}^{1}\left.u(s,t) \right|_{t=\theta}\df s \quad \text{and} \quad V_D(\theta,x)= \int_{x}^{1}\left.v(s,t) \right|_{t=\theta}\df s.
\]
Conversely, for $(U_D,V_D)\in\mathcal P$, define  $(\bar U_D,\bar V_D)\in\mathcal P$ by
\[
\bar U_D(\theta,x)=U_D(\theta,x)-U_D(\theta,1) \quad \text{and}\quad \bar V_D(\theta,x)=V_D(\theta,x)-V_D(\theta,1).
\]
Note that, for each $\Pi$, the principal gets the same expected payoff in the balanced delegation problem with $(U_D,V_D)$ and $(\bar U_D,\bar V_D)$, up to a constant, $\E [V_D(\theta,1)]$.

The balanced delegation problem \eqref{E:x}--\eqref{E:D} with primitive $(\bar U_D,\bar V_D)$ is a discriminatory disclosure problem \eqref{E:PPXX}--\eqref{E:PPI} with 
\[
u(s,t)=-\frac{\partial U_D (t,s)}{\partial s} \quad \text{and} \quad v(s,t)=-\frac{\partial V_D (t,s)}{\partial s}.
\]
Finally, $(U_D,V_D)$ satisfies (A$_1$)--(A$_2$) if and only if $(u,v)$ satisfies (A$'_1$)--(A$'_2$).

\subsection{Equivalence to Monotone Persuasion.}
Consider a discriminatory disclosure problem with a primitive $(u,v)$. A menu $\Pi\in\bf\Pi$ defines a monotone partition of $[0,1]$. The agent's optimal choice between $a=0$ and $a=1$ is the same whether he observes the partition element $\mu_\Pi(s)$ or the result of the optimally selected cutoff test $y\in \Pi$.   Indeed, by (A$'_2$), the agent gets a higher payoff from $a=1$ when a partition element is higher; so he optimally chooses $a=1$ whenever the partition element is at least $y\in \Pi$. Therefore, the agent behaves as if he observes the partition element $\mu_\Pi(s)$ that contains the state $s$. 

Furthermore, by (A$_2'$), the agent gets a higher expected payoff from $a=1$ when his type $t$ is lower; so he optimally chooses $a=1$ whenever $t\le z$ for some $z\in [0,1]$. Therefore, without loss of generality, after observing $\mu_\Pi(s)$, the agent chooses a threshold type $z\in [0,1]$, and then $a=1$ if $t\leq z$ and $a=0$ if $t>z$.  

Thus, the agent chooses a threshold type $z\in [0,1]$ that maximizes his expected payoff
\addtocounter{equation}{1}
\begin{equation}\label{E:PPYY}
z^*(s,\Pi)=\argmax_{z\in[0,1]}\E\left.\left[\int_0^z u(s',t)\df t\right|s'\in\mu_\Pi(s)\right].\tag{\arabic{equation}a}
\end{equation}
The principal's objective is to maximize her expected payoff,
\begin{equation}\label{E:PPM}
\max\limits_{\Pi\in\bf\Pi} \E\left[\int_0^{z^*(s,\Pi)} v(s,t)\df t \right].\tag{\arabic{equation}b}
\end{equation}
By (A$'_2$), $z^*(s,\Pi)$ is single-valued for all $s$, so $\E\big[\int_0^{z^*(s,\Pi)} v(s,t)\df t \big]$ is well defined.

The discriminatory disclosure problem \eqref{E:PPYY}--\eqref{E:PPM} is a monotone persuasion problem \eqref{E:y}--\eqref{E:P} with
\[
U_P(\theta,x)=\int_0^x \left.u(s,t)\right|_{s=\theta}\df t \quad \text{and} \quad V_P(\theta,x)=\int_0^x \left.v(s,t)\right|_{s=\theta}\df t.
\]
Conversely, for $(U_P,V_P)\in\mathcal P$, define $(\bar U_P,\bar V_P)\in\mathcal P$ by
\[
\bar U_P(\theta,x)=U_P(\theta,x)-U_P(\theta,0) \quad \text{and}\quad \bar V_P(\theta,x)=V_P(\theta,x)-V_P(\theta,0).
\]
Note that, for each $\Pi$, the principal gets the same expected payoff in the monotone persuasion problem with $(U_P,V_P)$ and $(\bar U_P,\bar V_P)$, up to a constant, $\E [V_P(\theta,0)]$.

The monotone persuasion problem \eqref{E:y}--\eqref{E:P} with primitive $(\bar U_P,\bar V_P)$ is a discriminatory disclosure problem \eqref{E:PPYY}--\eqref{E:PPM} with 
\[
u(s,t)=\frac{\partial U_P(s,t)}{\partial t} \quad \text{and} \quad v(s,t)=\frac{\partial V_P(s,t)}{\partial t}.
\]
Finally, $(U_P,V_P)$ satisfies (A$_1$)--(A$_2$) if and only if $(u,v)$ satisfies (A$'_1$)--(A$'_2$).

\section{Application to Monopoly Regulation}\label{s:applic}

We consider the classical problem of monopoly regulation as in \citeasnoun{BM82}. The monopolist privately knows his cost and chooses a price to maximize profit. The welfare-maximizing regulator can restrict the set of prices the monopolist can choose from, for example, by imposing a price cap. Following \citeasnoun{AM}, we assume that the demand function is linear and the marginal cost has a unimodal distribution. Importantly, unlike in \citeasnoun{BM82}, transfers between the monopolist and regulator are prohibited.

We study two versions of this problem: (i) with the monopolist's participation constraint, as in \citeasnoun{BM82} and \citeasnoun{AB2}, and (ii) without any participation constraint, as in \citeasnoun{AM}. We formulate both versions as balanced delegation problems. We then find the equivalent monotone persuasion problems and solve them using a single result from the persuasion literature. We show that, in both versions, the welfare-maximizing regulator imposes a price cap, which is higher when the participation constraint is present.

\subsection{Setup}

The demand function is $q=1-x$ where $x$ is the price and $q$ is the quantity demanded at this price. The cost of producing quantity $q$ is $\gamma q$. The marginal cost $\gamma\in \left[0,1\right]$ has a distribution $F$ that admits a strictly positive, continuous, and unimodal density~$f$. 

The monopolist's (agent's) payoff is the profit  
\[\label{e:monU}
U_D(\gamma,x)=(x-\gamma)(1-x).
\]
The regulator's (principal's) payoff is the sum of the profit and consumer surplus,
\[\label{e:monV}
V_D(\gamma,x)=U_D(\gamma,x)+\tfrac{1}{2}(1-x)^2.
\]
The regulator chooses a set of prices $\Pi\subset [0,1]$ available to the monopolist. The monopolist privately observes the marginal cost $\gamma$ and chooses a price $x$ from $\Pi$ to maximize profit $U_D(\gamma,x)$.

We first assume that the monopolist cannot be forced to operate at a loss. Formally, the monopolist can always choose to produce zero quantity, which is the same as setting price $x=1$; so $1\in \Pi$. 

Notice that selling at zero price gives a lower profit than not producing at all, regardless of the value of the marginal cost. Thus, allowing the price $x=0$ does not affect the monopolist's behavior; so, without loss of generality, $0\in \Pi$. To sum up, the regulator chooses a closed set of prices $\Pi\subset [0,1]$ that contains $0$ and $1$; so $\Pi \in \bf \Pi$.

To interpret this problem as a balanced delegation problem defined in Section \ref{s:deleg}, we change the variable $\theta=F(\gamma)$, so that $\theta$ is uniformly distributed on $[0,1]$. The monopolist's and regulator's payoffs are now given by
\beq\label{Eq:DelegUV}
U_D(\theta,x)=(x-F^{-1}(\theta))(1-x) \quad \text{and} \quad V_D(\theta,x)=U_D(\theta,x)+\tfrac{1}{2}(1-x)^2.
\eeq

\subsection{Analysis}\label{s:translation}
By Theorem~\ref{T:1}, an equivalent primitive $(U_P,V_P)$ of the monotone persuasion problem is given by\footnote{In Appendix \ref{s:int}, we provide an interpretation of this persuasion problem.}
\begin{align*}
U_P(\theta,x)&=-\int_0^x (1+F^{-1}(t)-2\theta)\df t=\int_0^{F^{-1}(x)} (2\theta-1-\gamma)f(\gamma)\df\gamma,\\
V_P(\theta,x)&=-\int_0^x (F^{-1}(t)-\theta)\df t=\int_0^{F^{-1}(x)} (\theta-\gamma)f(\gamma)\df\gamma,
\end{align*}
where the set of decisions is $[0,1]$, the set of states is $[0,1]$, and the state is uniformly distributed. In this problem, the payoffs are linear in the state; so only the posterior mean state matters. Under this assumption on the payoffs, optimal information structures are characterized in the literature (Section~\ref{LPSD}). We now illustrate how tools from this literature can address the monopoly regulation problem.

Let $m_\Pi(\theta)=\E[\theta'|\theta'\in \mu_\Pi(\theta)]$ be the posterior mean state induced by a partition element $\mu_\Pi(\theta)$ of a monotone partition $\Pi$. 
Since $U_P$ is linear in $\theta$, the agent's optimal decision depends on $\mu_\Pi(\theta)$ only through $m_\Pi(\theta)$; so $x^*_P(\theta,\Pi)=\bar x^*(m_\Pi(\theta))$ with
\[
\bar x^*(m)= \argmax_{x\in [0,1]}\int_0^{F^{-1}(x)} (2m-1-\gamma)f(\gamma)\df\gamma
=F(2m-1),
\]
where, by convention, $F(2m-1)=0$ if $2m-1\leq 0$ and $F(2m-1)=1$ if $2m-1\geq 1$.

Since $V_P$ is linear in $\theta$, the principal's expected payoff given $\mu_\Pi (\theta)$ is a function $\nu$ that depends on $\mu_\Pi(\theta)$ only through $m_\Pi(\theta)$:
\begin{align}
\nu(m)&=\int_0^{F^{-1}(\bar x^*(m))}(m-\gamma) f(\gamma)\df\gamma=\int_0^{2m-1}(m-\gamma) f(\gamma)\df\gamma.
\label{Eq:VM}
\end{align}
Thus, the principal's objective is to maximize the expectation of $\nu$,
\[\label{E:obj}
\max\limits_{\Pi\in\bf\Pi} \E\big[\nu(m_\Pi(\theta))\big].
\]

\begin{figure}
\begin{tabular}[b]{cc}
\hspace*{-0.15in}\includegraphics[width=230pt]{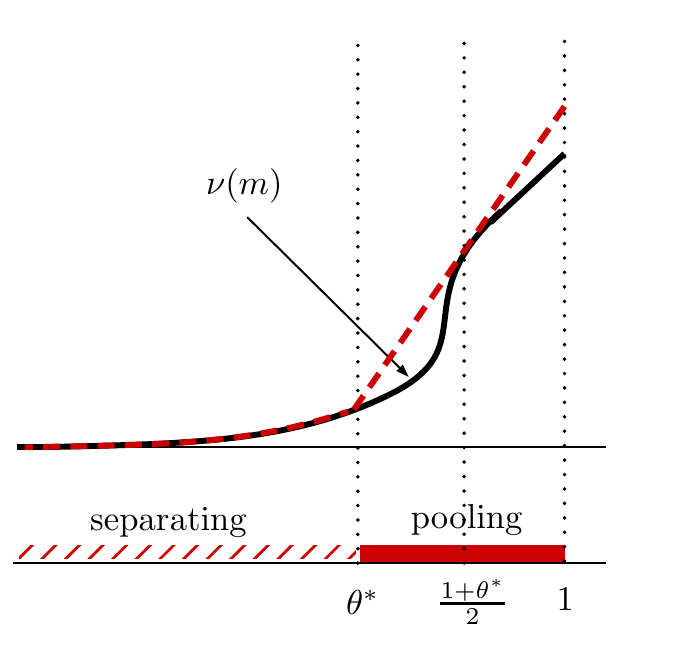} & \hspace*{-0.3in}\includegraphics[width=230pt]{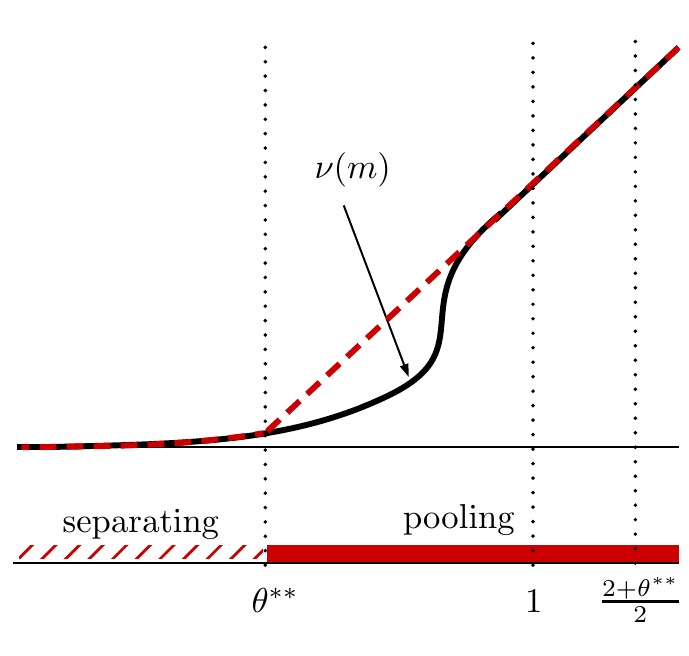}\\
\footnotesize{(a)  With Participation Constraint}&\footnotesize{(b) Without Participation Constraint}
\end{tabular}
\caption{Optimal Monotone Partitions}
\label{F:A}
\end{figure}

The curvature of $\nu$ determines the form of the optimal monotone partition. Because the density $f$ is unimodal,  $\nu$ is $S$-shaped (see Figure~\ref{F:A}(a)). Thus, the optimal monotone partition is an {\it upper-censorship}: the states below a cutoff $\theta^*$ are separated, and the states above $\theta^*$ are pooled and induce the posterior mean state equal to $(1+\theta^*)/2$.
\begin{proposition}\label{P:st}
Let $\gamma_m\in(0,1)$ be the mode of the density $f$. The set $\Pi^*=[0,\theta^*]\cup\{1\}$ is optimal, where $\theta^*\in \left(\gamma_m,(1+\gamma_m)/2\right)$ is the unique solution to
\beq\label{Eq:UC}
\nu\left(\frac{1+\theta^*}{2}\right)- \nu\left(\theta^*\right)=\left(\frac{1+\theta^*}{2}-\theta^*\right)\nu'\left(\frac{1+\theta^*}{2}\right).
\eeq
\end{proposition}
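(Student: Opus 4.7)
The plan is to invoke Theorem~\ref{T:1} to recast the balanced delegation problem as the monotone persuasion problem set up in Section~\ref{s:translation}, reduce the principal's objective to $\E[\nu(m_\Pi(\theta))]$ since the translated payoffs are linear in $\theta$, establish that the function $\nu$ in \eqref{Eq:VM} is $S$-shaped with inflection at $(1+\gamma_m)/2$, and finally apply the standard persuasion-theoretic result that an upper-censorship partition $[0,\theta]\cup\{1\}$ is optimal whenever $\nu$ is $S$-shaped.

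First I would establish the $S$-shape of $\nu$. Since $\bar x^*(m)=0$ for $m\le 1/2$, we have $\nu(m)=0$ on $[0,1/2]$. For $m\in[1/2,1]$, Leibniz's rule yields
\[
\nu'(m)=F(2m-1)+2(1-m)f(2m-1),\qquad \nu''(m)=4(1-m)f'(2m-1).
\]
Unimodality and strict positivity of $f$ then give $\nu''(m)>0$ on $(1/2,(1+\gamma_m)/2)$ and $\nu''(m)<0$ on $((1+\gamma_m)/2,1)$, so $\nu$ is convex on $[0,(1+\gamma_m)/2]$ and strictly concave on $[(1+\gamma_m)/2,1]$.

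Next I would invoke the persuasion result (discussed in Section~\ref{LPSD}; see also \citeasnoun{Kolotilin2017} and \citeasnoun{DM}) that upper-censorship is optimal for $S$-shaped $\nu$, so it suffices to optimize over $\Pi=[0,\theta]\cup\{1\}$. The principal's payoff as a function of the cutoff is
\[
W(\theta)=\int_0^\theta \nu(s)\,\df s+(1-\theta)\,\nu\!\left(\tfrac{1+\theta}{2}\right),
\]
and the first-order condition $W'(\theta)=0$ rearranges directly to equation~\eqref{Eq:UC}.

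For existence, location, and uniqueness, define $g(\theta):=\nu((1+\theta)/2)-\nu(\theta)-((1-\theta)/2)\,\nu'((1+\theta)/2)$, so \eqref{Eq:UC} is $g(\theta^*)=0$. At $\theta=\gamma_m$, both $\theta$ and $(1+\theta)/2=(1+\gamma_m)/2$ lie in the convex region of $\nu$; the standard tangent-below-chord inequality, strict because $\nu$ is not affine on $[\gamma_m,(1+\gamma_m)/2]$, gives $g(\gamma_m)<0$. At $\theta=(1+\gamma_m)/2$, both $\theta$ and $(1+\theta)/2$ lie in the strictly concave region, and the strict tangent-above-chord inequality gives $g((1+\gamma_m)/2)>0$. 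The intermediate value theorem then delivers $\theta^*\in(\gamma_m,(1+\gamma_m)/2)$. The main obstacle is uniqueness and global optimality of this $\theta^*$: the cleanest route is to argue that $W$, as a function of the upper-censorship cutoff, is first convex and then concave on $[0,1]$, inheriting its shape from $\nu$, so that any interior critical point is the unique global maximizer; equivalently, $g$ is strictly increasing on the relevant range, which can be verified by differentiating $g$ and using the signs of $\nu''$ established above.
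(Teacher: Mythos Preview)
Your overall strategy is sound, but there is one genuine technical gap and one substantive methodological difference from the paper.

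\textbf{The gap.} You compute $\nu''(m)=4(1-m)f'(2m-1)$ and sign it via $f'$. The paper, however, assumes only that $f$ is strictly positive, continuous, and unimodal; differentiability of $f$ is never imposed. So your derivation of the $S$-shape, and your proposed uniqueness argument (differentiating $g$ and using the sign of $\nu''$), are not justified under the stated hypotheses. The paper's Lemma~\ref{L:Unim} proves the $S$-shape without $f'$: it writes
\[
\nu'(m_2)-\nu'(m_1)=2(1-m_2)\bigl[f(2m_2-1)-f(2m_1-1)\bigr]+\int_{2m_1-1}^{2m_2-1}\bigl[f(\gamma)-f(2m_1-1)\bigr]\df\gamma
\]
and signs both terms using only monotonicity of $f$ on the relevant intervals. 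You should replace your second-derivative computation with this finite-difference argument (or restrict to differentiable $f$, which weakens the result).

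\textbf{The methodological difference.} After establishing the $S$-shape, you invoke an external upper-censorship theorem and then optimize over the cutoff via the first-order condition $W'(\theta)=0$. The paper instead uses its own Theorem~\ref{T:3}: given $\theta^*$ solving \eqref{Eq:UC}, it constructs the piecewise function $p_{\Pi^*}$ (equal to $\nu$ on $[0,\theta^*]$ and to the tangent line at $(\bar\theta+\theta^*)/2$ on $[\theta^*,\bar\theta]$), and verifies that $p_{\Pi^*}$ is convex and majorizes $\nu$. This self-contained verification delivers optimality among \emph{all} information structures (hence among monotone partitions) without appealing to outside results, and the tangent condition \eqref{Eq:UC} is exactly what makes $p_{\Pi^*}$ continuous at $\theta^*$. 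Your route is legitimate and arguably more intuitive, but it outsources the key optimality step; the paper's route keeps everything internal and shows directly why \eqref{Eq:UC} is the right condition. Both approaches leave the uniqueness of $\theta^*$ as a ``straightforward'' consequence of the $S$-shape geometry.
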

Since upper censorship $\Pi^*=[0,\theta^*]\cup\{1\}$ is optimal in the monotone persuasion problem, the same delegation set is optimal in the monopoly regulation problem. That is, the regulator imposes the price cap $\theta^*$, thus implementing the price function
\[
x^*(\gamma)=\begin{cases}
	\frac{1+\gamma} 2, & \text{if $\gamma<2\theta^*-1$},\\
	\theta^*,& \text{if $2\gamma\in (\theta^*-1,\theta^*)$}, \\
	1, & \text{if $\gamma>\theta^*$}.
\end{cases}
\]
In words, the monopolist chooses not to participate if his marginal cost $\gamma$ is above the price cap $\theta^*$. The participating monopolist chooses his preferred price $(1+\gamma)/2$ if it is below the cap, and he chooses the cap otherwise.

\subsection{Analysis without Participation Constraint}\label{s:wo}
We now assume that the regulator can force the monopolist to operate even when making a loss. That is, the regulator can choose any set of prices, without an additional constraint to include the price $x=1$.

To interpret this problem as a balanced delegation problem defined in Section \ref{s:deleg}, we observe that when the price $x$ is sufficiently high or sufficiently low, both the monopolist and regulator prefer intermediate prices. Thus, the requirement to include sufficiently extreme prices into the delegation set is not binding.

Specifically, consider $U_D$ and $V_D$ given by (\ref{Eq:DelegUV}) and defined on the domain of prices $[0,2]$. The regulator chooses a closed delegation set $\Pi\subset [0,2]$. Observe that, regardless of the marginal cost $\gamma\in[0,1]$, the regulator's payoff is negative if $x>1$ and zero if $x=1$. Therefore, an optimal delegation set $\Pi^*$ must contain a price $x_0\in[0,1]$. Moreover, the monopolist prefers $x_0$ to $0$ and to any price $x\geq 2$. Therefore, $\Pi^*\cup \{0,2\}$ implements the same price function as $\Pi^*$.

We thus obtain a balanced delegation problem, up to rescaling of the monopolist's decision. Using Theorem~\ref{T:1}, we find the equivalent primitive of the monotone persuasion problem. %

For comparability, it is convenient to rescale the state in the monotone persuasion problem, so that it is uniformly distributed on $[0,2]$. Analogously to Section \ref{s:translation}, the principal's objective now is to choose a monotone partition $\Pi\subset [0,2]$ such that $\{0,2\}\in \Pi$ to maximize the expectation of $\nu$ given by \eqref{Eq:VM} defined on $[0,2]$. Because $\nu$ is still $S$-shaped  (see Figure \ref{F:A}(b)), the optimal monotone partition is an upper-censorship: the states below a cutoff $\theta^{**}$ are separated, and the states above $\theta^{**}$ are pooled and induce the posterior mean state equal to $(2+\theta^{**})/2$.
\addtocounter{proposition}{-1}
\renewcommand{\theproposition}{\arabic{proposition}$'$}
\begin{proposition} \label{P:st1}
Let $\gamma_m\in(0,1)$ be the mode of the density $f$. The set $\Pi^{**}=[0,\theta^{**}]\cup\{2\}$ is optimal, where $\theta^{**}\in \left(0,(1+\gamma_m)/2\right)$ is the unique solution to
\beq\label{Eq:UC1}
\nu\left(\frac{2+\theta^{**}}{2}\right)- \nu\left(\theta^{**}\right)=\left(\frac{2+\theta^{**}}{2}-\theta^{**}\right)\nu'\left(\frac{2+\theta^{**}}{2}\right).
\eeq
\end{proposition}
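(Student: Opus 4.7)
The plan is to follow the proof strategy of Proposition~\ref{P:st}, modified only in that the domain of states is now $[0,2]$ rather than $[0,1]$ and the pool under upper-censorship is $[\theta^{**},2]$ with posterior mean $(2+\theta^{**})/2$. The first step is to verify that $\nu$, defined by \eqref{Eq:VM} and now viewed as a function on $[0,2]$, is again $S$-shaped: convex on $[0,(1+\gamma_m)/2]$ and concave on $[(1+\gamma_m)/2,2]$. A piecewise computation suffices: $\nu\equiv 0$ on $[0,1/2]$; on $[1/2,1]$ direct differentiation yields $\nu''(m)=4(1-m)f'(2m-1)$, whose sign flips precisely at $m=(1+\gamma_m)/2$ since $f$ is unimodal with interior mode $\gamma_m$; and on $[1,2]$ the upper limit $2m-1$ exceeds $1$, so $\nu(m)=m-\E[\gamma]$ is linear.

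Once the $S$-shape is established, I would invoke the standard persuasion result used in Section~\ref{s:LDLP}: for $S$-shaped indirect utility, the optimal monotone partition is an upper-censorship, with cutoff determined by a smooth-pasting (tangency) condition. Writing the principal's expected payoff under the upper-censorship $[0,\theta]\cup\{2\}$ as
\[
W(\theta)=\tfrac{1}{2}\int_0^\theta \nu(\tau)\,\df\tau+\tfrac{1}{2}(2-\theta)\,\nu\!\left(\tfrac{2+\theta}{2}\right)
\]
and setting $W'(\theta^{**})=0$ yields exactly \eqref{Eq:UC1}.

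The main obstacle is uniqueness of $\theta^{**}$ in the open interval $(0,(1+\gamma_m)/2)$. The decisive simplification is that $(2+\theta)/2\geq 1$ for every $\theta\in[0,2]$, so the pool posterior mean always lies in the linear part of $\nu$, where $\nu'\equiv 1$ and $\nu(m)=m-\E[\gamma]$. Condition \eqref{Eq:UC1} therefore collapses to $\nu(\theta^{**})=\theta^{**}-\E[\gamma]$. Setting $G(\theta)=\theta-\E[\gamma]-\nu(\theta)$ on $[0,(1+\gamma_m)/2]$, convexity of $\nu$ on this interval makes $G$ concave, with $G(0)=-\E[\gamma]<0$. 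Moreover, $G((1+\gamma_m)/2)>0$: the tangent to $\nu$ at $m=1$ is exactly $y=m-\E[\gamma]$, and a short computation using unimodality of $f$ shows $\nu((1+\gamma_m)/2)<(1+\gamma_m)/2-\E[\gamma]$ (the difference equals $-\int_{\gamma_m}^{1}((1+\gamma_m)/2-\gamma)f(\gamma)\,\df\gamma$, which is negative because $f$ is larger on the near-mode part of the integrand where the sign is positive). A concave function that switches sign between its endpoints has exactly one interior zero, giving the unique $\theta^{**}$; concavity further yields $G'(\theta^{**})\geq 0$, hence $\nu'(\theta^{**})\leq 1$ and $W''(\theta^{**})\leq 0$, confirming $\theta^{**}$ is a local maximiser, which is global by the upper-censorship characterisation invoked above.
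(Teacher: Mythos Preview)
Your argument is essentially correct but follows a different route from the paper, and there is one regularity gap worth flagging.

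The paper treats Propositions~\ref{P:st} and~\ref{P:st1} together as special cases of a single lemma for a general upper endpoint $\bar\theta\geq 1$: it establishes the $S$-shape of $\nu$ via a difference argument on $\nu'$ (Lemma~\ref{L:Unim}), then verifies optimality of the upper-censorship directly from Theorem~\ref{T:3} by writing down $p_{\Pi^\star}$ and checking convexity and $p_{\Pi^\star}\geq\nu$; uniqueness of the cutoff is asserted as ``straightforward'' with reference to the figures. You instead compute $\nu''$, invoke the upper-censorship characterisation from Section~\ref{s:LDLP}, and then exploit a feature specific to $\bar\theta=2$: the pool mean $(2+\theta)/2$ always lies in the linear region of $\nu$, so \eqref{Eq:UC1} collapses to $\nu(\theta^{**})=\theta^{**}-\E[\gamma]$. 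This simplification is not in the paper and gives you a more explicit uniqueness argument than the paper supplies; indeed it yields $W'(\theta)=-\tfrac12\big(\theta-\E[\gamma]-\nu(\theta)\big)$ on all of $[0,2]$, from which the global shape of $W$ follows directly.

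The gap is that your formula $\nu''(m)=4(1-m)f'(2m-1)$ requires $f$ to be differentiable, whereas the paper assumes only that $f$ is continuous and (strictly) unimodal. Lemma~\ref{L:Unim} avoids this by comparing $\nu'(m_2)-\nu'(m_1)$ using only monotonicity of $f$ on either side of the mode; your $S$-shape step should be rewritten in that spirit, or you should add a differentiability hypothesis. Everything downstream---the reduction of \eqref{Eq:UC1}, the concavity of $G(\theta)=\theta-\E[\gamma]-\nu(\theta)$ on $[0,(1+\gamma_m)/2]$, the sign computation $G((1+\gamma_m)/2)=\int_{\gamma_m}^{1}\big((1+\gamma_m)/2-\gamma\big)f(\gamma)\,\df\gamma>0$, and the resulting unique interior zero---uses only convexity/concavity of $\nu$ and survives intact.
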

\renewcommand{\theproposition}{\arabic{proposition}}
Proposition~\ref{P:st1} implies that $[0,\theta^{**}]$ is the optimal delegation set in the monopoly regulation problem without the participation constraint. That is, the regulator imposes the price cap $\theta^{**}$, thus implementing the price function
\[
x^{**}(\gamma)=\begin{cases}
	\frac{1+\gamma} 2, & \text{if $\gamma<2\theta^{**}-1$},\\
	\theta^{**},& \text{if $\gamma > 2\theta^{**}-1$}. 
\end{cases}
\]
In words, the monopolist chooses his preferred price $(1+\gamma)/2$ if it is below the price cap, and he chooses the cap otherwise.

\subsection{Discussion}
The optimal regulation policy takes the form of a price cap, regardless of whether the monopolist's participation constraint is present. However, the optimal price cap is higher when the participation constraint is present, as follows from Propositions~\ref{P:st} and \ref{P:st1}. Indeed, since $\nu$ is concave on $[(1+\gamma_m)/2,2]$ and
\[
\frac{1+\gamma_m}{2}<\frac{1+\theta^*}{2}<1<\frac{2+\theta^{**}}{2}<2,
\]
the slope of $\nu$ is higher at $(1+\theta^*)/2$ than at $(2+\theta^{**})/2$; so (\ref{Eq:UC}) and (\ref{Eq:UC1}) imply that $\theta^*>\theta^{**}$ (see Figures \ref{F:A}(a) and \ref{F:A}(b)).

We now build the intuition for why the optimal price cap is higher when the participation constraint is present. The first-order condition (\ref{Eq:UC}) for the optimal price cap $\theta^*$ can be written as
\beq\label{E:15p}
\int _{2\theta^*-1}^{\theta^*} (\theta^*-\gamma)f(\gamma)\df \gamma =\frac{1}{2}(1-\theta^*)^2f(\theta^*), \tag{\ref{Eq:UC}$'$}
\eeq
where the left-hand side and right-hand side correspond to the regulator's marginal gain and marginal loss of decreasing the price cap by $\df \theta$. The gain is that the monopolist with the cost $\gamma\in (2\theta^*-1,\theta^*)$ now chooses the decreased price cap $\theta^*-\df \theta$, which is closer to his cost $\gamma$. The loss is that the monopolist with the cost $\gamma\in (\theta^*-\df\theta,\theta^*)$ now chooses to exit. 

Instead, if the regulator does not take into account that the monopolist with the cost higher than the price cap exits, then the first-order condition (\ref{Eq:UC1}) for the price cap $\theta^{**}$ can be written as 
\beq\label{E:21p}
\int _{2\theta^{**}-1}^{\theta^{**}} (\theta^{**}-\gamma)f(\gamma)\df \gamma =\int _{\theta^{**}}^{1} (\gamma-\theta^{**} )f(\gamma)\df \gamma. \tag{\ref{Eq:UC1}$'$}
\eeq
The regulator's marginal gain here is the same. But the marginal loss is that the monopolist with the cost $\gamma\in (\theta^{**},1)$ chooses the decreased price cap $\theta^{**}-\df \theta$, which is further from his cost $\gamma$. 

Intuitively, the marginal loss in (\ref{E:15p}) is higher than in (\ref{E:21p}), because all surplus is lost if the monopolist exits, but only a part of surplus is lost if the monopolist sets the price further away from his cost.\footnote{The right-hand side of (\ref{E:15p}) can be expressed as $\int _{\theta^{*}}^{1} (\gamma-\theta^{*} )f(\theta^*)\df \gamma$.  This marginal loss is higher than in  (\ref{E:21p}), because $f(\theta^*)> f(\gamma)$ for $\gamma>\theta^*>\gamma_m$ by the unimodality of $f$.}
This suggests that the regulator should give more discretion to the monopolist when she is concerned that the monopolist can exit.

\section{Linear Delegation and Linear Persuasion}\label{s:LDLP}

\subsection{Setup\label{LPSD}}
Consider a primitive $(U_D,V_D)\in \mathcal P$ of the balanced delegation problem that satisfies 
\beq\label{E:LinD}
\frac{\partial U_D (\theta,x)}{\partial x} = b(\theta) - c(x) \quad \text{and} \quad \frac{\partial V_D (\theta,x)}{\partial x}  = d(\theta) -  c(x),
\eeq
where $b$, $c$, and $d$ are continuous, and $b$ and $c$ are strictly increasing. By Theorem \ref{T:1}, for each $(U_D,V_D)\in \mathcal P$ that satisfies \eqref{E:LinD}, an equivalent primitive $(U_P,V_P)\in \mathcal P$  of the monotone persuasion problem satisfies
\beq\label{E:LinP}
\frac{\partial U_P (\theta,x)}{\partial x} = c(\theta) - b(x) \quad \text{and} \quad \frac{\partial V_P (\theta,x)}{\partial x}=c(\theta)-d(x). 
\eeq
Conversely, for each $(U_P,V_P)\in \mathcal P$ that satisfies \eqref{E:LinP}, an equivalent primitive $(U_D,V_D)\in \mathcal P$  of the balanced delegation problem satisfies \eqref{E:LinD}.

We call $(U_D,V_D)$ and $(U_P,V_P)$ that satisfy \eqref{E:LinD} and  \eqref{E:LinP} {\it linear}, because the marginal payoffs from a decision are linear, respectively, in a transformation of the decision, $c(x)$, and in a transformation of the state, $c(\theta)$.

Linear delegation (albeit without the inclusion of the extreme decisions) has been studied by \citeasnoun{Holmstrom}, \citeasnoun{MS1991}, \citeasnoun{MS2006},  \citeasnoun{AM}, \citeasnoun{GHPS}, \citeasnoun{KovacMyl}, \citeasnoun{AB}, and \citeasnoun{ABF}.\footnote{\citeasnoun{AB2} study linear delegation with a participation constraint. Despite the same assumptions on the payoffs, linear delegation is conceptually different from veto-based delegation of \citeasnoun{KM2001}, \citeasnoun{Dessein}, and \citeasnoun{TM08} and from limited-commitment delegation of \citeasnoun{KLL}.}

Linear persuasion (albeit without the restriction to monotone partitions) has been studied by \citeasnoun{KG}, \citeasnoun{GK-RS}, \citeasnoun{KMZL}, \citeasnoun{Kolotilin2017}, and \citeasnoun{DM}.\footnote{\citeasnoun{DM} and \citeasnoun{KL} study linear monotone persuasion. \citeasnoun{DM} provide conditions under which monotone partitions are optimal among all information structures. \citeasnoun{KL} characterize optimal monotone partitions when they differ from optimal information structures.}

It is convenient to represent a linear monotone persuasion problem as 
\beq\label{E:obj1}
\max\limits_{\Pi\in\bf\Pi} \E\big[\nu(m_\Pi(\theta))\big],
\eeq
for some function $\nu$, where $m_\Pi(\theta)=\E[c(\theta')|\theta'\in \mu_\Pi(\theta)]$. We can derive $\nu$ from $(U_P,V_P)$ as follows. Since $U_P$ is linear in $c(\theta)$, the agent's optimal decision depends on $\mu_\Pi(\theta)$ only through $m_\Pi(\theta)$; so $x^*_P(\theta,\Pi)=\bar x^*(m_\Pi(\theta))$. Moreover, since $V_P$ is linear in $c(\theta)$, the principal's expected payoff is a function $\nu$ that depends on $\mu_\Pi(\theta)$ only through $m_\Pi(\theta)$,
\[
\nu(m)=V_P(m,\bar x^*(m)).
\] 
Conversely, for each function $\nu$, a monotone persuasion problem reduces to \eqref{E:obj1} if $(U_P,V_P)$ satisfies \eqref{E:LinP} with $b(x)=x$, $c(\theta)=\theta$, and $d(x)=-\nu'(x)$.

\subsection{Optimal Linear Delegation}\label{s:old}
We now generalize and extend the existing results in the literature on linear delegation, using the tools from the literature on linear persuasion (\citename{Kolotilin2017}, \citeyear*{Kolotilin2017}, and \citename{DM}, \citeyear*{DM}).

We consider a linear delegation problem where the agent's and principal's payoffs are given by \eqref{E:LinD}, the set of states is a compact interval, and the set of decisions is the real line.  
Without loss of generality, we rescale the state and decision so that $b(\theta)=\theta$, where the state $\theta\in [0,1]$ has a distribution $F$ that admits a strictly positive and continuous density~$f$.
For the problem to be well defined, we assume that there exist the agent's and principal's preferred decisions for each state.
Specifically, we assume that there exist $x',x''\in \R$ such that $c(x')\leq \theta\leq c(x'')$ and $c(x')\leq d(\theta)\leq c(x'')$ for all $\theta$. 

In this problem, the principal chooses a compact subset $\Pi\subset \R$ to maximize her expected payoff, 
\begin{equation*}
\max\limits_{\Pi\in{\bf\Pi}(\R)} \E\big[V_D(\theta,x^*_D(\theta,\Pi))\big],
\end{equation*}
where ${\bf\Pi}(\R)$ is the set of all compact subsets of $\R$. 
As we show in Appendix \ref{s:StdDel}, this problem can be formulated as a balanced delegation problem with a sufficiently large compact set of decisions $[\ul y,\ol y]$. Notice that the decision is rescaled so that the principal chooses $\Pi\in {\bf \Pi}([\ul y,\ol y])$ where 
\beq\label{E:Pi}
{\bf \Pi}([\ul y,\ol y])=\{\Pi\subset [\ul y, \ol y]: \text{$\Pi$ is closed and $\{\ul y,\ol y\}\subset \Pi$}\}.
\eeq
By Theorem \ref{T:1}, an equivalent primitive $(U_P,V_P)$ of the monotone persuasion problem is given~by
\begin{align*}
	U_P(\theta,x)&=\int_0^{F^{-1}(x)} (c(\theta)-t)f(t)\df t \ \ \text{and} \ \ V_P(\theta,x)=\int_0^{F^{-1}(x)} (c(\theta)-d(t))f(t)\df t,
\end{align*}
where the set of decisions is $[0,1]$, the set of states is $[\ul y,\ol y]$, and the state is uniformly distributed. Notice that the state is rescaled so that the principal chooses a monotone partition $\Pi\in {\bf \Pi}([\ul y,\ol y])$.

Since $U_P$ is linear in $c(\theta)$, the agent's optimal decision depends on $\mu_\Pi(\theta)$ only through 
\[m_\Pi(\theta)=\E[c(\theta')|\theta'\in \mu_\Pi(\theta)]=\begin{cases}
\{c(\theta)\},& \text{if $\ul\pi(\theta)=\ol\pi(\theta)$},\\
\frac{\int_{\ul\pi(\theta)}^{\ol\pi(\theta)} c(\theta')\df \theta'}{\ol\pi(\theta)-\ul\pi(\theta)},& \text{if $\ul\pi(\theta)<\ol\pi(\theta)$},
\end{cases}
\] 
where $\ul\pi(\theta)$ and $\ol\pi(\theta)$ are defined in Section~\ref{S:monp}; so $x^*_P(\theta,\Pi)=\bar x^*(m_\Pi(\theta))$ with
\[
\bar x^*(m) = \argmax_{x\in [0,1]}\int_0^{F^{-1}(x)} (m-t)f(t)\df t =F(m),
\]
where, by convention, $F(m)=0$ if $m\leq 0$ and $F(m)=1$ if $m\geq 1$.

Since $V_P$ is linear in $c(\theta)$, the principal's expected payoff given $\mu_\Pi (\theta)$ is a function $\nu$ that depends on $\mu_\Pi(\theta)$ only through $m_\Pi(\theta)$,
\begin{equation}
\begin{aligned}
\nu(m)&=\int_0^{F^{-1}(\bar x^*(m))}(m-d(t)) f(t)\df t = \int_0^{m}(m-d(t)) f(t)\df t.
\label{Eq:VM-6}
\end{aligned}
\end{equation}
The next theorem verifies whether a candidate delegation set is optimal.

\begin{theorem}\label{T:3}
$\Pi^*\in{\bf \Pi}([\ul y, \ol y])$ is optimal if $p_{\Pi^*}(m)$ is convex and $p_{\Pi^*}(m)\ge \nu(m)$ for all $m\in c([\ul y, \ol y])$, where, for all $s \in \mathbb [\ul y, \ol y]$,
\[
p_{\Pi^*}(c(s))=\nu(m_{\Pi^*}(s))+\nu'(m_{\Pi^*}(s))(c(s)-m_{\Pi^*}(s)).
\]
\end{theorem}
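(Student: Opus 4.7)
The plan is to apply the standard \emph{price function} argument from linear persuasion: use $p_{\Pi^*}$ as a convex majorant of $\nu$ whose conditional expectation on each element of $\Pi^*$ equals $\nu(m_{\Pi^*}(\cdot))$. This yields an upper bound on the value of any competing partition that is attained by $\Pi^*$. The monotone-partition restriction only shrinks the feasible set, so the same bound applies without modification.

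First, I would show that $\E[p_{\Pi^*}(c(\theta))] = \E[\nu(m_{\Pi^*}(\theta))]$. Fix any element $E$ of the partition induced by $\Pi^*$ and let $m^*=m_{\Pi^*}(s)$ for $s \in E$, which is constant in $s \in E$. By the definition of $p_{\Pi^*}$, the map $s\mapsto p_{\Pi^*}(c(s)) = \nu(m^*) + \nu'(m^*)(c(s) - m^*)$ is affine in $c(s)$ on $E$, so
\[
\E[p_{\Pi^*}(c(\theta)) \mid \theta \in E] = \nu(m^*) + \nu'(m^*)\bigl(\E[c(\theta) \mid \theta \in E] - m^*\bigr) = \nu(m^*),
\]
using $\E[c(\theta) \mid \theta \in E] = m^*$ by definition of $m_{\Pi^*}$. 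Integrating over elements of $\Pi^*$ gives the identity.

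Second, for any competing $\Pi \in {\bf \Pi}([\ul y, \ol y])$, I would establish the chain
\[
\E[\nu(m_\Pi(\theta))] \le \E[p_{\Pi^*}(m_\Pi(\theta))] \le \E[p_{\Pi^*}(c(\theta))].
\]
The first inequality follows from $p_{\Pi^*} \ge \nu$ on $c([\ul y, \ol y])$ applied pointwise to $m_\Pi(\theta)$, which lies in $c([\ul y, \ol y])$ as a conditional mean of values in that interval. The second inequality follows from Jensen's inequality applied on each element $E$ of $\Pi$ to the convex function $p_{\Pi^*}$ at $m_\Pi(E) = \E[c(\theta) \mid \theta \in E]$, giving $p_{\Pi^*}(m_\Pi(E)) \le \E[p_{\Pi^*}(c(\theta)) \mid \theta \in E]$; taking expectation over $\Pi$ concludes. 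Combining with the identity from Step 1 yields $\E[\nu(m_\Pi(\theta))] \le \E[\nu(m_{\Pi^*}(\theta))]$, which is optimality of $\Pi^*$.

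There is no real obstacle: this is the standard dual-concavification argument of \citeasnoun{Kolotilin2017} and \citeasnoun{DM}. The only mild subtlety is recognizing that, while $p_{\Pi^*}$ is defined piecewise via the tangent line to $\nu$ at each local posterior mean, the hypothesis demands that these pieces assemble into a single globally convex function dominating $\nu$ on $c([\ul y, \ol y])$; this global regularity is exactly what makes Jensen applicable across elements of an arbitrary $\Pi$ in the second step.
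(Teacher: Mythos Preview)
Your proof is correct and follows essentially the same route as the paper's: establish $\E[p_{\Pi^*}(c(\theta))]=\E[\nu(m_{\Pi^*}(\theta))]$ from the affine definition of $p_{\Pi^*}$, then for any competing $\Pi$ use $p_{\Pi^*}\ge\nu$ together with Jensen's inequality for the convex $p_{\Pi^*}$ and the law of iterated expectations. If anything, you spell out the first identity in slightly more detail than the paper, which simply records it as ``by definition of $p_{\Pi^*}$.''
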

\begin{proof} Consider any $\Pi\in {\bf \Pi}([\ul y, \ol y])$. The theorem follows from
\begin{align*}
\E[\nu(m_{\Pi^*}(s))]-\E[\nu(m_{\Pi}(s))]	&= \E[p_{\Pi^*}(c(s))]-\E[\nu(m_{\Pi}(s))]\\
&\ge \E[p_{\Pi^*}(c(s))]-\E[p_{\Pi^*}(m_{\Pi}(s))]\\
&=\E [\E [p_{\Pi^*}(c(s'))]-p_{\Pi^*}(m_{\Pi}(s)) | s'\in \mu_\Pi (s)]]\ge 0,
\end{align*}
where the first equality holds by definition of $p_{\Pi^*}$, the first inequality holds by $p_{\Pi^*}\ge \nu$, the second equality holds by the law of iterated expectations, and the second inequality holds by Jensen's inequality applied to convex $p_{\Pi^*}$.
\end{proof}

Using Theorem \ref{T:3}, we find sufficient conditions under which one- or two-interval delegation sets are optimal (see Figure \ref{f:3}).

\begin{figure}
\begin{tabular}{ccc}
\hspace*{-0.6in}\includegraphics[width=170pt]{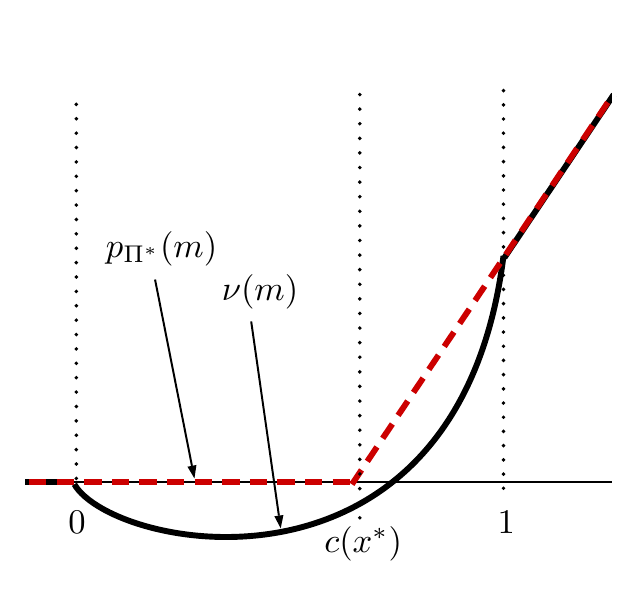} & \hspace*{-0.1in}\includegraphics[width=170pt]{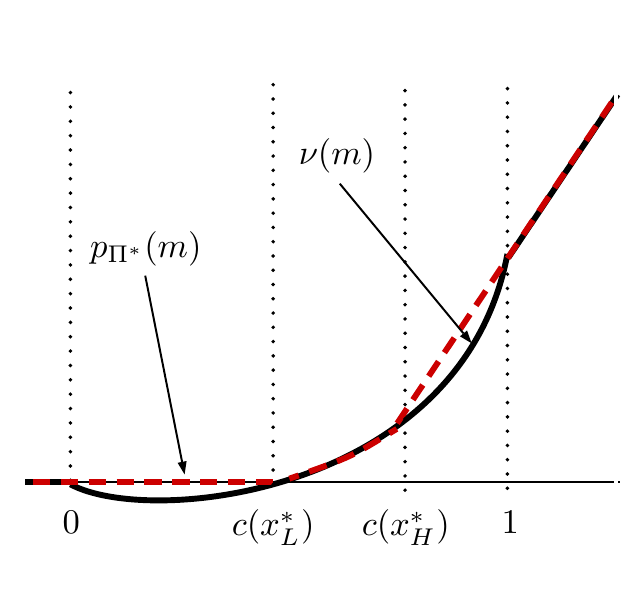} & \hspace*{-0.1in}\includegraphics[width=170pt]{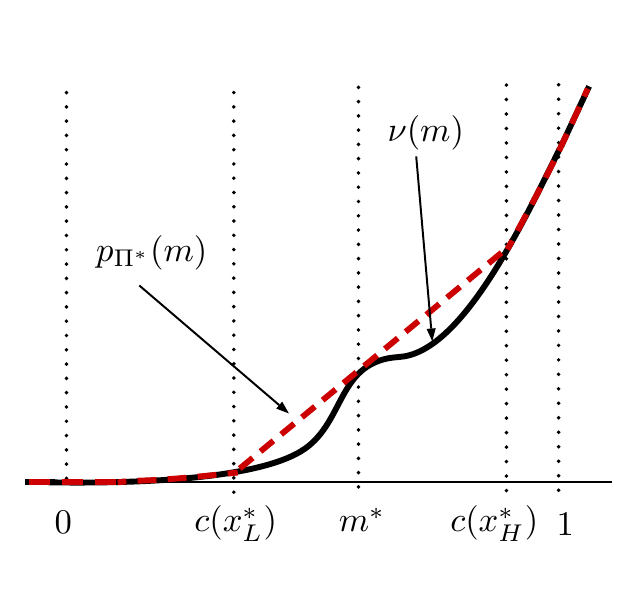}\\
\footnotesize{(a) Part 1}&\footnotesize{(b) Part 2} & \footnotesize{(c) Part 3}
\end{tabular}
\caption{Three Parts in Proposition \ref{P:D}}
\label{f:3}
\end{figure}

\begin{proposition}\label{P:D}\hfill

\nopagebreak
$1.$ A delegation set $\{x^*\}$ is optimal if
\begin{align*}
	&\nu(m) \leq 0 \ \text{ for all $m \leq c(x^*)$},\\
	&\nu(m) \leq \nu(1) +  m - 1 \ \text{ for all $m\geq c(x^*)$},\\
	&\nu(1) +c(x^*) - 1=0.
\end{align*}

$2.$ A delegation set $[x^*_L,x^*_H]$ with $x^*_L<x^*_H$ is optimal if
\begin{align*}
	&\nu(m) \ \text{is convex on $[c(x^*_L),c(x^*_H)]$},\\
	&\nu(m) \leq   0 \ \text{ for all $m \leq c(x^*_L)$ with equality at $m=c(x^*_L)$,}\\
	&\nu(m) \leq \nu(1) +  m - 1 \ \text{for all $m\geq c(x^*_H)$ with equality at $m=c(x^*_H)$},\\
	&\nu'(0 +) \ge 0\text{ if $c(x^*_L)=0$ and $\nu'(1 -) \leq 1$ if $c(x^*_H)=1$}.
\end{align*}

$3.$ A delegation set $(-\infty,x^*_L]\cup [x^*_H,\infty)$  with $x^*_L<x^*_H$ is optimal if\,\footnote{The set of the agent's preferred decisions is $[c^{-1}(0),c^{-1}(1)]$. Thus, delegation set $(-\infty,x^*_L]\cup [x^*_H,\infty)$ implements the same outcome as set $[y^*_L,x^*_L]\cup [x^*_H,y^*_H]$ does where $y^*_L=\min\{x^*_L,c^{-1}(0)\}$ and $y^*_H=\max\{x^*_H,c^{-1}(1)\}$.}
\begin{align*}
	&\nu(m) \ \text{is convex on $(-\infty,c(x^*_L)]$ and on $[c(x^*_H),\infty)$},\\
	&\nu(m) \leq   \nu(m^*)+\nu'(m^*) (m - m^*)  \text{ for all $m \in[c(x^*_L),c(x^*_H)]$}\\ 
	& \quad \quad \quad \  \text{with equality at $m=c(x^*_L)$ and $m=c(x^*_H)$},\\
	&\nu'(m^*) \ge 0\text{ if $c(x^*_L)=0$ and $\nu'(m^*) \le 1$ if $c(x^*_H)=1$}. 
\end{align*}
where
\[
m^*=\frac{1}{x^*_H-x^*_L}{\int_{x^*_L}^{x^*_H} c(s)\df s}.
\]
\end{proposition}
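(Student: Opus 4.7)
The plan is to apply Theorem~\ref{T:3} in each of the three parts by writing down an explicit piecewise-defined convex function $p_{\Pi^*}$ dominating $\nu$ and verifying the two hypotheses (convexity of $p_{\Pi^*}$ and $p_{\Pi^*}\ge \nu$) directly from the listed inequalities. Throughout, I would use that $\nu(m)=0$ for $m\le 0$ and $\nu(m)=\nu(1)+m-1$ for $m\ge 1$, as is immediate from \eqref{Eq:VM-6} since $\bar x^*(m)=F(m)$ is capped at $0$ and $1$. Hence taking the bounds $\ul y,\ol y$ wide enough pushes the means of any tail pools into these ``flat'' regions of $\nu$.

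I would start with Part~2, which exhibits the full structure. Here $\Pi^*=\{\ul y\}\cup[x^*_L,x^*_H]\cup\{\ol y\}$, so in the equivalent monotone persuasion problem the states in $[x^*_L,x^*_H]$ are separated and each of the two tails $[\ul y,x^*_L)$ and $[x^*_H,\ol y)$ is pooled. For sufficiently wide bounds, the low-pool mean is $\le 0$ and the high-pool mean is $\ge 1$, so on these tails the tangent formula in Theorem~\ref{T:3} yields $p_{\Pi^*}(m)=0$ and $p_{\Pi^*}(m)=\nu(1)+m-1$ respectively, while on the separated middle $m_{\Pi^*}(s)=c(s)$ gives $p_{\Pi^*}(m)=\nu(m)$. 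Continuity at the break points $c(x^*_L)$ and $c(x^*_H)$ is precisely the two equality conditions. Convexity reduces to (a) convexity of $\nu$ on $[c(x^*_L),c(x^*_H)]$ (assumed), (b) the slope $0$ on the left tail not exceeding $\nu'(c(x^*_L)+)$, and (c) $\nu'(c(x^*_H)-)$ not exceeding the slope $1$ on the right tail. Using that $\nu$ is $C^1$ on $(0,1)$, (b) and (c) follow from the two equality conditions whenever the break points are interior to $(0,1)$; the additional assumptions $\nu'(0+)\ge 0$ and $\nu'(1-)\le 1$ cover exactly the corner cases $c(x^*_L)=0$ and $c(x^*_H)=1$. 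Finally, $p_{\Pi^*}\ge \nu$ holds on the middle by equality and on each tail by the corresponding inequality in the hypothesis.

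Part~1 is the degenerate case of Part~2 in which the separated middle collapses to the point $x^*$; the candidate becomes $p_{\Pi^*}(m)=\max\{0,\,\nu(1)+m-1\}$, the equality $\nu(1)+c(x^*)-1=0$ joins the two affine pieces continuously at $m=c(x^*)$, convexity is immediate (slope $0\to 1$), and the two listed inequalities give $p_{\Pi^*}\ge\nu$. Part~3 is the dual configuration: with $\Pi^*=[y^*_L,x^*_L]\cup[x^*_H,y^*_H]$ after the footnote's reformulation, the two side pieces are now separated (so $p_{\Pi^*}(m)=\nu(m)$ there, convex on each by hypothesis) and the middle $[c(x^*_L),c(x^*_H)]$ is pooled with mean $m^*$, giving $p_{\Pi^*}(m)=\nu(m^*)+\nu'(m^*)(m-m^*)$, the tangent to $\nu$ at $m^*$. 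The tangent-majorant assumption yields $p_{\Pi^*}\ge \nu$ on the middle with equality at the two break points (hence continuity of $p_{\Pi^*}$), and these equalities together with $\nu\in C^1$ force $\nu'(c(x^*_L))\le \nu'(m^*)\le \nu'(c(x^*_H))$, the slope condition needed for convexity at both break points; the one-sided assumptions on $\nu'(m^*)$ handle the boundary cases $c(x^*_L)=0$ and $c(x^*_H)=1$ where one separated side disappears.

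The main obstacle is the boundary bookkeeping at the edges of $c([\ul y,\ol y])$: one must carefully distinguish which slope inequality at each break point is automatic from the equality-at-endpoint conditions and which must be imposed separately as an extra one-sided-derivative hypothesis when a break point coincides with $0$ or $1$. Once the candidates $p_{\Pi^*}$ are correctly assembled and continuity is verified, the rest is direct application of Theorem~\ref{T:3}.
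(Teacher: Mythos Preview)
Your proposal is correct and follows essentially the same route as the paper's proof: in each part you construct the explicit piecewise $p_{\Pi^*}$ dictated by Theorem~\ref{T:3}, use that $\nu(m)=0$ for $m\le 0$ and $\nu(m)=\nu(1)+m-1$ for $m\ge 1$ to simplify the tail pieces, and then read off convexity and $p_{\Pi^*}\ge\nu$ from the stated hypotheses, handling the non-differentiability of $\nu$ at $0$ and $1$ via the extra one-sided conditions. The paper proceeds in the order 1--2--3 and justifies the tail means $m^*_L<0$, $m^*_H>1$ by a direct computation from the choice of $\ul y,\ol y$, whereas you start with Part~2 and invoke the affine form of $\nu$ outside $[0,1]$; the arguments are otherwise identical.
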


The literature on linear delegation has focused on characterizing sufficient conditions for interval delegation to be optimal. The conditions in \citename{AB} (\citeyear*{AB}, Proposition 1) coincide with those in Proposition \ref{P:D} (part 2), but we impose weaker regularity assumptions on the primitives. \citename{AB} (\citeyear*{AB}, Proposition 2) show that these conditions are also necessary.\footnote{\citename{AM} (\citeyear*{AM}, Propositions 3, 6, 7) also show the necessity and sufficiency of these conditions but only for quadratic payoffs ($c(x)=x$).} 

The sufficient conditions in \citename{ABF} (\citeyear*{ABF}, Proposition 1) for degenerate interval delegation to be optimal coincide with those in Proposition \ref{P:D} (part 1), but again we impose weaker regularity assumptions on the primitives. \citename{AM} (\citeyear*{AM}, Proposition 1) show that these conditions are also necessary.\footnote{They show the necessity only for quadratic payoffs ($c(x)=x$), but we can use their argument to show the necessity for non-quadratic payoffs. To this end, suppose first that there exists $m'<c(x^*)$ such that $\nu(m')>0$ and let $x'<x^*$ be such that $\int_{x'}^{x^*}c(s)\df s =  (x^*-x') m'$. The principal's expected payoff is larger by $(x^*-x') \nu(m')>0$ for delegation set $\{x',x^*\}$  than for $\{x^*\}$. The case in which there exists $m'>c(x^*)$ such that $\nu(m')>\nu(1)+m'-1$ is similar and omitted.}

The sufficient conditions in Proposition \ref{P:D} (part 3) for two-interval delegation to be optimal are novel. This result implies \citename{MS1991} (\citeyear*{MS1991}, Proposition 4) and \citename{AM} (\citeyear*{AM}, Result 6).

Using Proposition \ref{P:D}, we now characterize optimal delegation sets in prominent cases.

\begin{proposition}\label{P:Infl}\hfill

$1$ (convex). If $\nu(m)$ is convex on $[0,1]$, then there exist $x_L^*\le x^*_H$ such that delegation set $[x_L^*,x_H^*]$ is optimal.

$2$ (concave). If $\nu(m)$ is concave on $[0,1]$, then there exist $x_L^*\le x^*_H$ such that delegation set $\{x_L^*,x_H^*\}$ is optimal.

$3$ (convex-concave). If $\nu(m)$ is convex on $[0,\tilde m]$ and concave on $[\tilde m,1]$ for some $0<\tilde m<1$, then there exist $x_L^*\le x^*_M \le x^*_H$ such that delegation set $[x_L^*,x_M^*]\cup \{x^*_H\}$ is optimal.

$4$ (concave-convex). If $\nu(m)$ is concave on $[0,\tilde m]$ and convex on $[\tilde m,1]$ for some $0<\tilde m<1$, then there exist $x_L^*\le x^*_M \le x^*_H$ such that delegation set $\{x_L^*\}\cup [x^*_M,x^*_H]$ is optimal.
\end{proposition}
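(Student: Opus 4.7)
The plan is to apply Proposition \ref{P:D} to handle cases 1 and 2 (where the candidate delegation sets are a single interval and two singletons, respectively), and to apply Theorem \ref{T:3} directly for cases 3 and 4 (whose candidate sets combine an interval with a singleton and are not covered verbatim by Proposition \ref{P:D}). A useful preliminary observation is that, because $f$ is supported on $[0,1]$, the function $\nu(m)=\int_0^m(m-d(t))f(t)\df t$ extends as $\nu\equiv 0$ on $(-\infty,0]$ and as the affine function $\nu(m)=m-\int_0^1 d(t)f(t)\df t$ on $[1,\infty)$, so on the persuasion state range $c([\ul y,\ol y])\supseteq[0,1]$ the function $\nu$ has the assumed curvature on $[0,1]$ flanked by a flat piece on the left and a slope-one piece on the right.

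For case 1, I would take $c(x^*_L)$ to be the largest $m\in[0,1]$ at which $\nu$ touches the line $y=0$ from below and $c(x^*_H)$ the smallest $m\in[0,1]$ at which $\nu$ touches the line $y=\nu(1)+m-1$ from below, falling back to the endpoint $0$ or $1$ when no interior touch exists. Convexity of $\nu$ on $[c(x^*_L),c(x^*_H)]\subseteq[0,1]$ is automatic, and the three conditions of part 2 of Proposition \ref{P:D} follow from this convexity together with the flat and slope-one extensions on the two sides, which also yield the boundary slope inequalities $\nu'(0^+)\ge 0$ and $\nu'(1^-)\le 1$ when they apply.

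For case 2, I would apply part 3 of Proposition \ref{P:D} with $c(x^*_L)\le 0\le 1\le c(x^*_H)$, so that convexity of $\nu$ outside $[c(x^*_L),c(x^*_H)]$ is inherited from the extensions. Writing $m^*$ for the weighted mean of $c$ over $[x^*_L,x^*_H]$, the two endpoint tangency equations in part 3 give a pair of equations in $x^*_L,x^*_H$ whose solvability follows from a continuity/intermediate-value argument in the pair $(x^*_L,x^*_H)$. Concavity of $\nu$ on $[0,1]$ then ensures the tangent at $m^*$ dominates $\nu$ on $[0,1]$, and the extensions handle $m\notin[0,1]$; the footnote to Proposition \ref{P:D} identifies the implemented outcome with that of $\{x^*_L,x^*_H\}$.

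For case 3, I would verify Theorem \ref{T:3} with a piecewise-affine $p_{\Pi^*}$. The set $[x^*_L,x^*_M]\cup\{x^*_H\}$ (balanced by $\ul y,\ol y$) induces, in the persuasion view, a left pool, a separating interval on which $p_{\Pi^*}=\nu$, and two right pools whose tangent lines complete $p_{\Pi^*}$. The key step is to pick $x^*_L$ so the left pool mean is $\le 0$ (killing its contribution, since $\nu\equiv 0$ there), $c(x^*_M)\in(\tilde m,1]$ in the concave region, and $x^*_H$ so that (i) the secant-tangent condition $\nu(c(x^*_M))=\nu(m_1)+\nu'(m_1)(c(x^*_M)-m_1)$ holds at the middle-pool mean $m_1$ and (ii) the middle-pool tangent continues smoothly into the slope-one affine extension at $m\ge 1$. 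Convexity of $p_{\Pi^*}$ then reduces to monotone slopes across the joints, which follows from the secant-tangent condition combined with the curvature of $\nu$; the dominance $p_{\Pi^*}\ge\nu$ combines convexity of $\nu$ on $[0,\tilde m]$ with the tangent-above-concave property on $[\tilde m,1]$. Case 4 is symmetric, with a lower-censorship construction. The principal technical obstacle is the simultaneous existence of $x^*_L,x^*_M,x^*_H$ satisfying the secant-tangent and smooth-joining equations, which follows from a continuity argument analogous to the standard characterization of optimal upper/lower-censorship cutoffs in the monotone persuasion literature (\citeasnoun{Kolotilin2017}, \citeasnoun{DM}).
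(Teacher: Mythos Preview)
Your overall strategy---construct a candidate $\Pi^*$ and verify the hypotheses of Theorem~\ref{T:3} (directly or via Proposition~\ref{P:D})---is exactly the paper's. The paper's own proof is a bare case analysis on the signs of $\nu'(0+)$ and $\nu'(1-)-1$; for each of the resulting subcases it names the optimal $\Pi^*$ and leaves the verification implicit. Your attempt to replace that case split by a single ``unified'' construction is where things go wrong.

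Two concrete problems. First, in part~3 you place the cutoff of the separating interval in the concave region, writing $c(x^*_M)\in(\tilde m,1]$. This cannot work: on the separating interval $p_{\Pi^*}\equiv\nu$, so convexity of $p_{\Pi^*}$ forces $[c(x^*_L),c(x^*_M)]\subset[0,\tilde m]$, i.e.\ $c(x^*_M)\le\tilde m$. The paper's case list confirms this (in all four subcases of part~3, $c(x^*_M)\le\tilde m$). Your subsequent ``tangent-above-concave'' argument for $p_{\Pi^*}\ge\nu$ only makes sense once $c(x^*_M)$ is moved back into the convex region.

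Second, your treatment of part~2 via Proposition~\ref{P:D}(3) with $c(x^*_L)\le 0\le 1\le c(x^*_H)$ silently assumes the two tangency equations are solvable with $m^*\in(0,1)$ and $0<\nu'(m^*)<1$. When $\nu'(0+)\le 0$ (so $\nu$ is nonincreasing on $[0,1]$) no such $m^*$ exists; the paper handles this subcase separately with the one-point set $\{x^*_H\}$ (its case~(b)). A symmetric failure occurs when $\nu'(1-)\ge 1$. The same issue bites in parts~1, 3, and 4: the boundary-slope subcases are precisely why the paper resorts to the four-way split on $(\nu'(0+),\nu'(1-))$, and your continuity/IVT sketch does not cover them.
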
 

\citename{ABF} (\citeyear*{ABF}, Proposition 2) coincides with Proposition \ref{P:Infl} (part 1), but again we impose weaker regularity assumptions on the primitives. \citeasnoun{MS2006} assume that $d(x)=x+\delta$ and $f(\theta)-\delta f'(\theta)\geq 0$, so that $\nu$ is convex on $[0,1]$. 

Parts 2--4 in Proposition \ref{P:Infl} are novel. \citeasnoun{MS1991} assume uniform density $f(\theta)=1$ and constant slope $d'(x)=k>0$, so that $\nu$ is convex on $[0,1]$ if $k<2$ and concave on $[0,1]$ if $k>2$. In the monopoly regulation problem of Section~\ref{s:applic}, $\nu$ is convex-concave on $[0,1]$ because $f$ is unimodal, but $\nu$ would be concave-convex on $[0,1]$ if $f$ was uniantimodal.

Finally, we comment on the limits of the tools from the persuasion literature for the delegation problem. The persuasion literature studies optimal information structures without the restriction to monotone partitions. 
Under the conditions in Theorem~\ref{T:3}, the set $\Pi$ is optimal among all information structures. These conditions are not necessary, because $\Pi$ may be optimal among monotone partitions, but not among all information structures. However, as shown by \citename{DM} (\citeyear*{DM}, Theorem~3), these conditions become necessary if $\nu$ is {\it affine-closed} (intuitively, if it has at most one interior peak). Even when optimal information structures are not monotone partitions, it may be possible to characterize optimal monotone partitions using tools from \citeasnoun{KL}.

\subsection{Optimal Linear Delegation with Participation Constraint}\label{s:linpart}
We now consider a linear delegation problem from Section~\ref{s:old} with the only difference that the set of decisions is $[x_0,\infty)$ and the decision $x_0$ must always be permitted by the principal. We assume that $c(x_0)\leq 0$ and that there exists $x''> x_0$ such that $\theta\leq c(x'')$ and $d(\theta)\leq c(x'')$ for all $\theta\in [0,1]$.  As follows from Appendix \ref{s:StdDel}, this problem can be formulated as a balanced delegation problem with a sufficiently large compact set of decisions $[x_0,\ol y]$. That is, the principal chooses $\Pi\in {\bf \Pi}([x_0,\ol y])$ to maximize the expectation of $\nu$ given by \eqref{Eq:VM-6}. Thus, Theorem \ref{T:3} continues to hold with $\ul y = x_0$.

As an illustration, we use Theorem \ref{T:3} to find sufficient conditions under which an optimal delegation set takes the form of a floor on the decisions.
\addtocounter{proposition}{-2}
\renewcommand{\theproposition}{\arabic{proposition}$'$}
\begin{proposition}\label{P:7} A delegation set $\{x_0\}\cup [x^*,\infty)$ with $x^*\geq x_0$ is optimal if
\begin{align*}
&\nu(m) \ \text{is convex on $[c(x^*),\infty)$},\\
&\nu(m) \leq   \nu(m^*)+\nu'(m^*) (m - m^*) \ \text{ for all $m \in [c(x_0), c(x^*)]$  with equality at $m=c(x^*)$},\\
&\nu'(0 +) \ge 0 \text{ if $c(x^*)=0$ and $\nu'(1 -) \leq 1$ if $c(x^*)=1$},
\end{align*}
where
\[
m^*=
\begin{cases}
c(x_0),& \text{if $x^*= x_0$},\\
\frac{1}{x^*-x_0} \int_{x_0}^{x^*} c(s)\df s, & \text{if $x^*> x_0$}.
\end{cases}
\]
\end{proposition}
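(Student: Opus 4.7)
The plan is to apply Theorem \ref{T:3} with $\Pi^* = \{x_0\} \cup [x^*,\ol y]$, treating the problem on the sufficiently large compact decision set $[x_0,\ol y]$ furnished by Appendix \ref{s:StdDel}. The partition induced by $\Pi^*$ consists of a single pooling element $[x_0,x^*)$ with $c$-posterior mean $m^*$ as defined in the proposition, together with singletons $\{s\}$ for $s\in[x^*,\ol y]$. Substituting into the formula for $p_{\Pi^*}$ in Theorem \ref{T:3} therefore yields
\[
p_{\Pi^*}(m)=\begin{cases} \nu(m^*)+\nu'(m^*)(m-m^*), & m\in[c(x_0),c(x^*)], \\ \nu(m), & m\in[c(x^*),c(\ol y)]. \end{cases}
\]
The domination $p_{\Pi^*}(m)\ge\nu(m)$ on $c([x_0,\ol y])$ is then immediate: on the left subinterval it is the explicit tangent-domination hypothesis, and on the right subinterval it holds with equality.

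The main obstacle is verifying that $p_{\Pi^*}$ is convex on $c([x_0,\ol y])$. The linear piece is trivially convex, and the right piece is convex by the hypothesis that $\nu$ is convex on $[c(x^*),\infty)$, so the only substantive point is the slope comparison $\nu'(c(x^*)+)\ge \nu'(m^*)$ at the junction $m=c(x^*)$. I would handle this by cases, anchored in the fact that, by \eqref{Eq:VM-6}, $\nu(m)=mF(m)-\int_0^m d(t)f(t)\df t$ on $[0,1]$, with $\nu\equiv 0$ on $(-\infty,0]$ and $\nu(m)=m-\E[d(\theta)]$ on $[1,\infty)$.

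When $c(x^*)\in(0,1)$, $\nu$ is $C^1$ on a neighborhood of $c(x^*)$ (since $F$, $d$, $f$ are continuous), so $\nu'(c(x^*)+)=\nu'(c(x^*)-)$; and from the tangent-domination inequality with equality at $c(x^*)$, the left derivative of $m\mapsto\nu(m)-[\nu(m^*)+\nu'(m^*)(m-m^*)]$ at $c(x^*)$ is nonnegative, giving $\nu'(c(x^*)-)\ge\nu'(m^*)$. When $c(x^*)=0$, one has $m^*\in[c(x_0),0]$, hence $\nu(m^*)=0$ and $\nu'(m^*)=0$, so the required inequality reduces exactly to the stated boundary condition $\nu'(0+)\ge 0$. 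When $c(x^*)=1$, the same tangent-domination argument taken from the left yields $\nu'(1-)\ge\nu'(m^*)$, and combining this with the stated boundary condition $\nu'(1-)\le 1=\nu'(1+)$ closes the chain. The degenerate case $x^*=x_0$ is trivial: the linear piece collapses to a point and $p_{\Pi^*}=\nu$, which is convex by assumption. With both sufficient conditions of Theorem \ref{T:3} verified, $\Pi^*=\{x_0\}\cup[x^*,\infty)$ is optimal.
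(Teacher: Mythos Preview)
Your proposal is correct and follows essentially the same approach as the paper: write down $p_{\Pi^*}$ for $\Pi^*=\{x_0\}\cup[x^*,\ol y]$ and verify the hypotheses of Theorem~\ref{T:3}. The paper's proof is terser---it simply asserts that the listed conditions yield convexity and domination, pointing back to the analogous boundary handling in Proposition~\ref{P:D}---whereas you spell out the case analysis at the junction $m=c(x^*)$ explicitly.
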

\renewcommand{\theproposition}{\arabic{proposition}}

Proposition~\ref{P:7} can be used to derive conditions for a price cap (equivalently, a quantity floor) to be optimal in a monopoly regulation problem of Section~\ref{s:applic} with more general primitives specified as follows. The inverse demand function is $P(q)$. The marginal cost $\gamma\in [\ul \gamma, \ol \gamma ]$, with $0\leq \ul\gamma< \ol \gamma < \infty$, has a strictly positive and continuous density. The regulator's payoff is a weighted sum, with weights $\lambda\in (0,1)$ and $1-\lambda$, of the profit and consumer surplus,
\[
V(\gamma,q)=\lambda (P(q)-\gamma)q+ (1-\lambda) \left( \int_0^{q} P(s) \df s - P(q) q \right).
\]
We assume that the monopolist can always choose to produce zero quantity (exit).

The regulator's problem reduces to a balanced delegation problem in which the regulator chooses a set of quantities $\Pi\in {\bf \Pi} ([0,\ol q])$, where $P(\ol q) \leq \ul \gamma$, available to the monopolist. 
Up to rescaling of the state and decision, all assumptions imposed in this section are satisfied in each of the following three cases:

$1.$ $P(q)=A-B\ln q$ with $A\in \R$ and $B>0$;

$2.$ $P(q)=A-B q^C$ with $A>0$, $B>0$, and $C>0$;

$3.$ $P(q) = A - B q^{C}$ with $A<\ul \gamma$, $B<0$, and $C\in (-1,0)$.

For each of these cases, Proposition~\ref{P:7} gives sufficient conditions on the cost distribution, demand parameters, and payoff weights for price-cap regulation to be optimal. These conditions can be compared with those in \citeasnoun{AB2} who consider a similar setting but focus on delegation sets under which the monopolist never chooses to exit.

\section{General Equivalence}\label{s:proof-ext}

We now generalize Theorem \ref{T:1}. We assume that the marginal payoffs are integrable and single crossing, rather than continuous and monotone. We then discuss how this result can be used to deal with arbitrary distributions of the state, which may have atoms and zero density.

\subsection{Primitives}
We first define single crossing properties.
A function $\phi(y_i,y_{-i})$ satisfies

(i) {\it upcrossing in $y_i$} if, for each $y_{-i}$,
\[
\phi(y_i',y_{-i})\ge (>)\, 0  \implies  \phi(y_i'',y_{-i})\ge (>)\, 0 \ \ \text{whenever $y_i''>y_i'$};
\]

(ii) {\it aggregate upcrossing in $y_i$} if, for each distribution $H$ of $y_{-i}$, 
\[
\int \phi(y_i',y_{-i})\df H(y_{-i})\ge (>)\, 0  \implies \int \phi(y_i'',y_{-i})\df H(y_{-i})\ge(>)\, 0 \ \ \text{whenever $y_i''>y_i'$};
\]

(iii) {\it downcrossing (aggregate downcrossing) in $y_i$} if $-\phi(y_i,y_{-i})$ satisfies upcrossing (aggregate upcrossing) in $y_i$.\footnote{\citeasnoun{Quah2012} characterize conditions for aggregate single crossing. In particular, $\phi(y_i,y_{-i})$ satisfies aggregate single crossing in $y_i$ if $\phi(y_i,y_{-i})$ is monotone in $y_i$.}

The agent's payoff $U(\theta,x)$ and the principal's payoff $V(\theta,x)$ depend on a decision $x\in[0,1] $ and a state $\theta\in[0,1]$. The state is uniformly distributed. We assume that 

($\bar{\rm A}_1$) $U(\theta,x)$ and $V(\theta,x)$  are absolutely continuous in $x$; $\frac{\partial}{\partial x} U(\theta,x)$ and $\frac{\partial}{\partial x} V(\theta,x)$ are integrable in $\theta$. 

In addition, for the balanced delegation problem, we assume that

 ($\bar{\rm A}_2^D$)  $\frac{\partial}{\partial x} U(\theta,x)$ satisfies downcrossing in $x$ and aggregate upcrossing in $\theta$;
 
for the monotone persuasion problem, we assume that

 ($\bar{\rm A}_2^P$)  $\frac{\partial}{\partial x} U(\theta,x)$ satisfies upcrossing in $\theta$ and aggregate downcrossing in $x$.
 
The balanced delegation and monotone persuasion problems are defined as in Section \ref{Model}. But, unlike in Section \ref{Model}, the agent's optimal correspondences $x^*_D(\theta,\Pi)$ and $x^*_P(\theta,\Pi)$ may not be single-valued. Depending on which optimal decisions are selected by the agent, the principal can obtain different expected payoffs. Denote by
\begin{equation*}
\E\big[V_D(\theta,x^*_D(\theta,\Pi))\big] \quad\text{and}\quad \E\big[V_P(\theta,x^*_P(\theta,\Pi))\big]
\end{equation*}
the sets of the principal's expected payoffs resulting from all integrable selections from the correspondences $x^*_D(\theta,\Pi)$ and  $x^*_P(\theta,\Pi)$ \cite{Aumann}. 
 
\subsection{Equivalence}
Let $\bar{\mathcal P}_D$ be the set of all primitives $(U_D,V_D)$ that satisfy assumptions ($\bar{\rm A}_1$) and ($\bar{\rm A}_2^D$), and let $\bar{\mathcal P}_P$ be the set of all primitives $(U_P,V_P)$ that satisfy assumptions ($\bar{\rm A}_1$) and ($\bar{\rm A}_2^P$).

Primitives $(U_D,V_D)\in \bar{\mathcal P}_D$ and $(U_P,V_P)\in \bar{\mathcal P}_P$  are {\it equivalent} if there exist $\alpha>0$ and $\beta\in\R$ such that
\begin{equation*}
\E\big[V_D(\theta,x^*_D(\theta,\Pi))\big]=\alpha \E\big[V_P(\theta,x^*_P(\theta,\Pi))\big]+ \beta \quad \text{for all $\Pi\in{\bf \Pi}$}.
\end{equation*}
That is, if $(U_D,V_D)$ and $(U_P, V_P)$ are equivalent, then, in both problems, the principal gets the same {\it sets} of expected payoffs, up to an affine transformation, for each $\Pi$.

\addtocounter{theorem}{-2}
\renewcommand{\thetheorem}{\arabic{theorem}$'$}
\begin{theorem}\label{T:2}
For each $(U_D,V_D)\in\bar{\mathcal P}_D$,  an equivalent  $(U_P,V_P)\in\bar{\mathcal P}_P$ is given by \eqref{E:DtoP}. Conversely, for each $(U_P,V_P)\in\bar{\mathcal P}_P$,  an equivalent  $(U_D,V_D)\in\bar{\mathcal P}_D$ is given by \eqref{E:PtoD}.
\end{theorem}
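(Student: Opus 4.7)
The plan is to replicate the three-way equivalence of Section~\ref{s:equiv} in the weakened setting. I would define a generalized discriminatory disclosure problem in which assumption (A$'_2$) is replaced by: $u(s,t)$ satisfies upcrossing in $s$ and aggregate downcrossing in $t$, with integrability and absolute continuity in place of continuity. Using the same transformations as in Section~\ref{s:equiv}, namely $u(s,t)=-\partial U_D(t,s)/\partial s$ and $v(s,t)=-\partial V_D(t,s)/\partial s$ on the delegation side, and $u(s,t)=\partial U_P(s,t)/\partial t$ and $v(s,t)=\partial V_P(s,t)/\partial t$ on the persuasion side, I would show that both the balanced delegation problem on $\bar{\mathcal P}_D$ and the monotone persuasion problem on $\bar{\mathcal P}_P$ are strategically equivalent to this generalized disclosure problem. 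Composing the two reductions yields the formulas \eqref{E:DtoP} and \eqref{E:PtoD}, with the same additive constant $\beta=\E[V_D(\theta,1)]$ (equivalently $-\E[V_P(\theta,0)]$) absorbing the shift from $(U_D,V_D)$ to $(\bar U_D,\bar V_D)$, resp.\ from $(U_P,V_P)$ to $(\bar U_P,\bar V_P)$.

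The first step is routine bookkeeping: verify that the assumptions translate correctly. Under $u(s,t)=-\partial U_D(t,s)/\partial s$, downcrossing of $\partial U_D/\partial x$ in $x$ becomes upcrossing of $u$ in $s$ (the sign flip reverses the direction), and aggregate upcrossing of $\partial U_D/\partial x$ in $\theta$ becomes aggregate downcrossing of $u$ in $t$. An analogous calculation on the persuasion side maps $(\bar{\rm A}_2^P)$ to exactly the same single-crossing conditions on $u$. Integrability of $\partial U_D/\partial x$ in $\theta$ gives integrability of $u$, and absolute continuity of $U_D(\theta,\cdot)$ in $x$ is precisely the fundamental-theorem-of-calculus hypothesis needed for the integral representations \eqref{E:DtoP} and \eqref{E:PtoD} to recover $U_D$ and $U_P$ from their derivatives.

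The substantive step is to redo the two ``agent-optimality'' arguments of Section~\ref{s:equiv} under the weakened assumptions, and here the asymmetry between pointwise and aggregate single crossing becomes important. For the delegation-to-disclosure reduction, pointwise upcrossing of $u$ in $s$ is exactly what is needed so that, for each test $y$, a weakly optimal action rule is $a=1$ iff $s\ge y$; the aggregate version is not invoked on this side. For the persuasion-to-disclosure reduction, aggregate upcrossing of $u$ in $s$ is precisely what is needed so that $\E[u(s',t)\mid s'\in\mu_\Pi(s)]$ inherits the sign-monotone structure across partition elements of any $\Pi\in\bf\Pi$, and aggregate downcrossing in $t$ delivers a threshold-type representation of the agent's optimum. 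This is exactly why $(\bar{\rm A}_2^D)$ and $(\bar{\rm A}_2^P)$ impose aggregate single crossing on different coordinates.

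The main obstacle is that the best-response correspondences $x^*_D$ and $x^*_P$ are now set-valued, so the equivalence must be established at the level of \emph{sets} of expected payoffs generated by integrable selections. I would handle this by showing that each integrable selection from $x^*_D(\cdot,\Pi)$ induces, via the disclosure reformulation, a measurable integrable selection of cutoffs $y^*(t,\Pi)$, and conversely; and similarly on the persuasion side, where selections of $x^*_P(\cdot,\Pi)$ correspond to measurable selections of threshold types $z^*(s,\Pi)$. The principal's expected payoff, expressed in the disclosure problem as $\E[\int_{y^*(t,\Pi)}^1 v(s,t)\df s]$ (resp.\ $\E[\int_0^{z^*(s,\Pi)} v(s,t)\df t]$), is invariant under these bijections of selections and coincides, up to $\beta$, with the delegation (resp.\ persuasion) payoff. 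Equality of the two sets of attainable payoffs follows, and composing the selection bijections on the two sides gives the asserted strategic equivalence between $(U_D,V_D)$ and the $(U_P,V_P)$ defined by \eqref{E:DtoP}.
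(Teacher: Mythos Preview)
Your overall architecture is the paper's: pass through a generalized discriminatory disclosure problem, translate the single-crossing hypotheses, and compose the two reductions. The bookkeeping on assumption translation is correct, and you have correctly identified that the set-valuedness of $x^*_D$ and $x^*_P$ is the only new difficulty relative to Section~\ref{s:equiv}.

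The gap is in how you propose to resolve that difficulty. You plan to build, for each $\Pi$, a payoff-preserving \emph{bijection} between integrable selections of $x^*_D(\cdot,\Pi)$ and integrable selections of $x^*_P(\cdot,\Pi)$. But a delegation selection $t\mapsto y^*(t)\in\Pi$ determines the action-$1$ region $\{(s,t):s\ge y^*(t)\}$, whose horizontal sections are right half-lines, whereas a persuasion selection $s\mapsto z^*(s)$ determines $\{(s,t):t\le z^*(s)\}$, whose vertical sections are lower intervals. Even after imposing optimality (so that both regions lie between $J_\Pi^+=\{u_\Pi>0\}$ and $J_\Pi=\{u_\Pi\ge0\}$), these two families of subsets of $[0,1]^2$ do not coincide in general, and there is no natural map from one family to the other that preserves $\int_J v_\Pi$. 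So ``composing the selection bijections'' is not a step one can carry out.

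The paper sidesteps this by showing that on \emph{each} side the set of attainable principal payoffs equals the single family $\bigl\{\int_{J} v_\Pi(s,t)\,\df s\,\df t:\ J_\Pi^+\subset J\subset J_\Pi,\ J\text{ measurable}\bigr\}$, up to the additive constant $\beta=\E[V_D(t,1)]-\E[V_P(s,0)]$. The inclusion ``selections $\subset$ this family'' is immediate from optimality; the reverse inclusion is where Aumann's (1965) theorem on integrals of correspondences is invoked. Since both the delegation and persuasion payoff sets collapse to the \emph{same} family indexed by $J$, equality follows without ever matching selections pairwise. That common $J$-representation is the missing idea in your proposal.

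One smaller point: your sentence ``aggregate upcrossing of $u$ in $s$ is precisely what is needed so that $\E[u(s',t)\mid s'\in\mu_\Pi(s)]$ inherits the sign-monotone structure'' misplaces the aggregate hypothesis. What is assumed (under either $(\bar{\rm A}_2^D)$ or $(\bar{\rm A}_2^P)$) is \emph{pointwise} upcrossing of $u$ in $s$ and \emph{aggregate} downcrossing of $u$ in $t$. The aggregate condition in $t$ is what makes $u_\Pi(s,t)=\E[u(s',t)\mid s'\in\mu_\Pi(s)]$ downcrossing in $t$, which yields the threshold structure on the persuasion side; the pointwise condition in $s$ is what makes $u_\Pi$ upcrossing in $s$ across partition elements, which yields the cutoff structure on the delegation side.
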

\renewcommand{\thetheorem}{\arabic{theorem}}

The principal's set of expected payoffs is a singleton for each $\Pi$ (and thus the principal's maximization problem is well defined) if the optimal correspondence is single-valued for almost all $\theta$. This property holds in the delegation problem if $\frac{\partial}{\partial x}U_D(\theta,x)$ satisfies {\it strict} aggregate upcrossing in $\theta$. Similarly, this property holds in the persuasion problem if $\frac{\partial}{\partial x}U_P(\theta,x)$ satisfies {\it strict} aggregate downcrossing in $x$.

Alternatively, the principal's maximization problem can be defined by specifying a selection rule from the agent's optimal correspondence, such as the max rule (where the agent chooses the principal's most preferred decision) or the min rule (where the agent chooses the principal's least preferred decision).

\subsection{General Distributions}
Consider a balanced delegation or monotone persuasion problem with the state $\omega\in[\ul \omega,\ol\omega]$ that has a distribution $F$ (possibly, with atoms and zero density). To apply Theorem \ref{T:2}, we redefine the state to be $\theta$ uniformly distributed on $[0,1]$, and define $\omega=F^{-1}(\theta)$, where $F^{-1}(\theta)=\inf \{\omega\in [\ul \omega,\ol \omega]:\theta \leq F(\omega)\}$ is the quantile function.

With this change of variable, atoms in $F$ translate into intervals where $\frac{\partial}{\partial x}U(F^{-1}(\theta),x)$ and $\frac{\partial}{\partial x}V(F^{-1}(\theta),x)$ are constant in $\theta$; zero-density intervals in $F$ translate into points where $\frac{\partial}{\partial x}U_D(F^{-1}(\theta),x)$ and $\frac{\partial}{\partial x}V_D(F^{-1}(\theta),x)$ have simple discontinuities in $\theta$. This change of variable preserves both integrability in $\theta$ assumed in ($\bar{\rm A}_1$) and single-crossing in $\theta$ assumed in ($\bar{\rm A}_2^D$)/($\bar{\rm A}_2^P$).

For the case of atomless distributions with a strictly positive density, Theorem \ref{T:2} can be conveniently expressed as follows.
Let $U_D(\omega_D,x)$ and $V_D(\omega_D,x)$ satisfy assumptions ($\bar{\rm A}_1$) and ($\bar{\rm A}_2^D$), and let  $U_P(\omega_P,x)$ and $V_P(\omega_P,x)$ satisfy assumptions ($\bar{\rm A}_1$) and ($\bar{\rm A}_2^P$). Suppose that $\omega_D\in[0,1]$ and $\omega_P\in[0,1]$ are distributed with strictly positive densities $f_D$ and $f_P$. Applying Theorem \ref{T:2} to the primitives with redefined states, and changing states back to $\omega_D$ and $\omega_P$, we obtain that $(U_D,V_D)$ and $(U_P,V_P)$ are equivalent if, for all $\omega_D,\omega_P\in[0,1]$, 
\begin{align*}
&\left.\frac{\partial U_D(\omega_D,x)} {\partial x}\right|_{x=\omega_P}f_D(\omega_D)=-\left.\frac{\partial U_P(\omega_P,x) }{\partial x}\right|_{x=\omega_D}f_P(\omega_P),\\
&\left.\frac{\partial V_D(\omega_D,x)} {\partial x}\right|_{x=\omega_P}f_D(\omega_D)=-\left.\frac{\partial V_P(\omega_P,x) }{\partial x}\right|_{x=\omega_D}f_P(\omega_P).
\end{align*}

To illustrate how Theorem \ref{T:2} applies when $F$ has atoms, consider the example in \citename{KG} (\citeyear*{KG}, p.~2591) in which a prosecutor (principal) persuades a judge (agent) to convict a suspect. The suspect is innocent $(\omega=0)$ with probability 0.7 and guilty $(\omega=1)$ with probability 0.3, so that the distribution of $\omega$ is
\[
F(\omega)=\begin{cases}
0.7, & \text{if $\omega\in [0,1)$,}\\
1, & \text{if $\omega= 1$.}
\end{cases}
\]
The prosecutor's preferred decision is to convict the suspect irrespective of the state, whereas the judge's preferred decision is to convict the suspect whenever his posterior that the suspect is guilty is at least $1/2$,
\[
V_P(\omega,x)=x \quad\text{and}\quad U_P(\omega,x)=(\omega-1/2)x,
\]
where $\omega,x\in[0,1]$.
Substituting $\omega=F^{-1}(\theta)$ yields 
\[
V_P(\theta,x)=x \quad\text{and} \quad U_P(\theta,x)=\begin{cases}
-x/2, & \text{if $\theta\in[0,0.7)$},\\
x/2, & \text{if $\theta\in[0.7,1]$}.
\end{cases}
\]
Clearly, $(U_P,V_P)\in\bar{\mathcal P}_P$. By Theorem \ref{T:2}, an equivalent primitive $(U_D,V_D)\in\bar{\mathcal P}_D$ of the balanced delegation problem is given by
\[
V_D(\theta,x)=1-x \quad\text{and}\quad U_D(\theta,x)=\begin{cases}
(x-0.4)/2, & \text{if $x\in[0,0.7)$},\\
(1-x)/2, & \text{if $x\in[0.7,1]$}.
\end{cases}
\]
Assume that the agent breaks ties in favor of the principal. It is easy to see that the optimal balanced delegation set is $\Pi^*=\{0,0.4,1\}$, so that the agent is indifferent between $x=0.4$ and $x=1$ and thus chooses the principal's preferred decision, $x^*=0.4$. The principal's optimal expected payoff is $1-x^*=0.6$. 

Let us interpret the solution $\Pi^*=\{0,0.4,1\}$ within the persuasion problem. When the state $\theta$ belongs to the partition element $[0,0.4)$, the posterior is that $\omega=F^{-1}(\theta)=0$ with probability one, so the judge acquits the suspect. When the state $\theta$ belongs to the partition element $[0.4,1)$, the posterior is that $\omega=F^{-1}(\theta)=0$ with probability $1/2$, when $\theta\in[0.4,0.7)$, and $\omega=F^{-1}(\theta)=1$ with probability $1/2$, when $\theta\in[0.7,1)$. So, the judge is indifferent between convicting and acquiting, and thus convicts the suspect. The prosecutor's expected payoff is $0.6$, and $\Pi^*$ is the optimal information structure derived in \citename{KG} (\citeyear*{KG}, p.~2591).

\section{Concluding Remarks}\label{s:conc}

We have shown the equivalence of balanced delegation and monotone persuasion, with the upshot that insights in delegation can be used for better understanding of persuasion, and vice versa. For instance, persuasion as the design of a distribution of posterior beliefs is notoriously hard to explain to a non-specialized audience. The connection to delegation can thus be instrumental in relaying technical results from the persuasion literature to practitioners and policy makers.

We have used the tools from the literature on linear persuasion to obtain new results on linear delegation. \citeasnoun{AB} have developed a Lagrangian method to derive sufficient conditions for the optimality of interval delegation in a nonlinear delegation problem. This method may be useful for deriving conditions for the optimality of interval persuasion in an equivalent nonlinear persuasion problem.

The classical delegation and persuasion problems have numerous extensions, which include a privately informed principal, competing principals, multiple agents, repeated interactions, and multidimensional state and decision spaces. We hope that our equivalence result will be a starting point for studying the connection between delegation and persuasion in these extensions.   

It may be interesting to compare the values of delegation and persuasion in a given problem. This comparison can be made by recasting the persuasion problem as an equivalent delegation problem and then directly comparing the solutions and values of these two delegation problems.

Naturally, a principal may wish to influence an agent's decision by a combination of persuasion and delegation instruments. How to optimally control both information and decisions of the agent, how these instruments interact, and whether they are substitutes or complements are important questions that are left for future research.

\section*{Appendix}
\renewcommand{\thesection}{A}

\subsection{Interpretation of Persuasion Problem}\label{s:int}
In Section \ref{s:applic}, we have expressed the monopoly regulation problem as a balanced delegation problem and derived an equivalent monotone persuasion problem. The primitive of this problem, up to a multiplicative constant, is
\beq\label{e:mon-1}
U_P(\theta,x)=\frac{1}{2}\int_0^{x} (2\theta-1-F^{-1}(\gamma))\df \gamma \quad\text{and}\quad V_P(\theta,x)=\int_0^{x} (\theta-F^{-1}(\gamma))\df \gamma.
\eeq
We now provide an interpretation of this problem.

A producer (agent) chooses a quantity $x$ to produce. He faces uncertainty about an exogenous price $\theta$ that is uniformly distributed on $[0,1]$. %
The producer's payoff is
\beq\label{e:monU-1}
U_P(\theta,x)=\theta x-C(x),
\eeq
where $C(x)$ is the producer's cost function. A government agency (principal) can disclose information about the price to the producer. The agency's payoff is
\beq\label{e:monV-1}
V_D(\theta,x)=\theta x-D(x),
\eeq
where is $D(x)$ is the social cost function. The difference, $C(x)-D(x)$, is the producer's externality. The agency chooses $\Pi\in{\bf \Pi}$. The producer observes the partition element $\mu_\Pi(\theta)$ that contains the price $\theta$ and chooses a quantity $x$ that maximizes his expected payoff given the posterior belief about the price.

Observe that the payoffs \eqref{e:monU-1}--\eqref{e:monV-1} are the same as \eqref{e:mon-1} if 
\[
C(x)=\int_0^x\frac{1+F^{-1}(\gamma)}{2}\df \gamma \quad \text{and} \quad D(x)=\int_0^x F^{-1}(\gamma)\df \gamma.
\]
Both cost functions $C$ and $D$ are convex, since $F^{-1}$ is increasing. Notice that $C(x)-D(x)>0$ for all $x>0$. So, in this problem, we have a positive externality, where the social cost is smaller than the producer's cost.

\subsection{Proof of Propositions \ref{P:st} and \ref{P:st1}}

The derivative of $\nu(m)$ is
\[\label{Charles}
\nu'(m)=2(1-m)f(2m-1)+\int_0^{2m-1} f(\gamma)\df \gamma.
\]

First, we show that $\nu(m)$ is $S$-shaped (that is, $\nu'$ is single-peaked with an interior peak) when the density $f$ is unimodal.
\begin{lemma}\label{L:Unim}
Let $\gamma_m\in(0,1)$ be the mode of the density $f$. %
Then, $\nu(m)$ is convex on  $(-\infty,(1+\gamma_m)/2]$ (strictly so on $[1/2,(1+\gamma_m)/2]$) and concave on $[(1+\gamma_m)/2,\infty)$ (strictly so on $[(1+\gamma_m)/2,1]$).
\end{lemma}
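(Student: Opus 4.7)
The plan is to differentiate $\nu$ twice on the interior of its nontrivial range and read off convexity and concavity from the sign of $\nu''$ via unimodality of $f$. I would apply Leibniz's rule to $\nu(m) = \int_0^{2m-1}(m-\gamma) f(\gamma) \df\gamma$ on $m \in [1/2, 1]$ to recover $\nu'(m) = 2(1-m) f(2m-1) + F(2m-1)$, the expression already displayed just before the lemma. Outside $[1/2, 1]$, $\nu \equiv 0$ on $(-\infty, 1/2]$ (the upper limit of the integral is non-positive) and $\nu(m) = m - \E[\gamma]$ on $[1, \infty)$ (the upper limit saturates the support), so $\nu$ is constant (hence convex) on the far left and affine (hence concave) on the far right.

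Next, I would differentiate $\nu'$ once more on $(1/2, 1)$. The term from differentiating $F(2m-1)$ is $+2 f(2m-1)$, while the term from differentiating $2(1-m) f(2m-1)$ is $-2 f(2m-1) + 4(1-m) f'(2m-1)$; the boundary contributions cancel, leaving the key identity $\nu''(m) = 4(1-m) f'(2m-1)$. Since $1 - m > 0$ on $(1/2, 1)$, the sign of $\nu''$ coincides with that of $f'(2m-1)$. Unimodality of $f$ with mode $\gamma_m$ immediately gives $\nu''(m) \geq 0$ on $[1/2, (1+\gamma_m)/2]$ and $\nu''(m) \leq 0$ on $[(1+\gamma_m)/2, 1]$, with strict inequality on each open sub-interval where $f$ is strictly monotone.

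Finally, I would patch the pieces together at the seams. At $m = 1/2$, $\nu'$ has an upward jump from $0$ to $f(0) \geq 0$, which is compatible with convexity on $(-\infty, (1+\gamma_m)/2]$; at $m = 1$, $\nu'(1^-) = F(1) = 1 = \nu'(1^+)$, so $\nu'$ is continuous there and the transition to the affine piece on $[1, \infty)$ preserves concavity on $[(1+\gamma_m)/2, \infty)$. The main subtlety I anticipate is that unimodality of a merely continuous density does not presume differentiability, so the identity for $\nu''$ need not hold classically. I would circumvent this by working with $\nu'$ directly: writing $\nu'(m) = g(2m-1)$ with $g(z) = (1-z) f(z) + F(z)$ and using the algebraic identity $g(z_2) - g(z_1) = (1-z_2)[f(z_2) - f(z_1)] + \int_{z_1}^{z_2}[f(s) - f(z_1)]\df s$, monotonicity of $g$ on $[0, \gamma_m]$ and on $[\gamma_m, 1]$ is immediate from monotonicity of $f$, giving convexity and concavity of $\nu$ without invoking $f'$.
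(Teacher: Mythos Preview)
Your proposal is correct and, once you invoke the workaround in the final paragraph, coincides with the paper's proof: the identity $g(z_2)-g(z_1)=(1-z_2)[f(z_2)-f(z_1)]+\int_{z_1}^{z_2}[f(s)-f(z_1)]\,\df s$ with $z_i=2m_i-1$ is exactly the decomposition of $\nu'(m_2)-\nu'(m_1)$ that the paper writes down and signs directly from the monotonicity of $f$. The detour through $\nu''(m)=4(1-m)f'(2m-1)$ and the explicit patching at $m=1/2$ and $m=1$ are sound additions but not present in the paper's argument.
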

\begin{proof}
For any $m_1,m_2$,
\bea
\nu'(m_2)-\nu'(m_1)&=&2(1-m_2)[f(2m_2-1)-f(2m_1-1)]\\
&&+\int_{2m_1-1}^{2m_2-1} [f(\gamma)-f(2m_1-1)]\df \gamma.
\eea
Thus, if $m_1<m_2\le (1+\gamma_m)/2$, then $\nu'(m_2)\geq\nu'(m_1)$, because $f(2m-1)$ is increasing for $m\in (-\infty,(1+\gamma_m)/2]$. Moreover, if $1/2\le m_1<m_2\le (1+\gamma_m)/2$, then $\nu'(m_2)>\nu'(m_1)$, because $f(2m-1)$ is strictly increasing for $m\in[1/2,(1+\gamma_m)/2]$. Similarly, if $(1+\gamma_m)/2\le m_1<m_2$, then $\nu'(m_2)\le \nu'(m_1)$, because $f(2m-1)$ is decreasing for $m\in[(1+\gamma_m)/2,\infty)$. Moreover, if $(1+\gamma_m)/2\le m_1<m_2\le 1$, then $\nu'(m_2)<\nu'(m_1)$, because $f(2m-1)$ is strictly decreasing for $m\in[(1+\gamma_m)/2,1]$.
\end{proof}

Propositions \ref{P:st} and \ref{P:st1} are special cases of Lemma \ref{P:st-2}.
\begin{lemma}\label{P:st-2}
Let $\gamma_m\in(0,1)$ be the mode of the density $f$, and let $\theta$ be uniformly distributed on $[0,\ol \theta]$ with $\ol \theta\ge 1$. The set $\Pi^\star=[0,\theta^\star]\cup\{\ol \theta\}$ is optimal, where $\theta^\star\in \big(\max\big\{0,1+\gamma_m-\ol \theta\big\},(1+\gamma_m)/2\big)$ is the unique solution to
\beq\label{Eq:UC-a}
\nu\left(\frac{\ol \theta+\theta^\star}{2}\right)- \nu\left(\theta^\star\right)=\left(\frac{\ol \theta+\theta^\star}{2}-\theta^\star\right)\nu'\left(\frac{\ol \theta+\theta^\star}{2}\right).
\eeq
\end{lemma}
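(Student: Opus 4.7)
The strategy is to invoke Theorem \ref{T:3} with $\nu$ given by \eqref{Eq:VM} and the candidate being the upper-censorship $\Pi^\star = [0,\theta^\star]\cup\{\bar\theta\}$. The key geometric input is Lemma \ref{L:Unim}: on $[0,\bar\theta]$ with $\bar\theta\geq 1$, $\nu$ is $S$-shaped with inflection at $\tfrac{1+\gamma_m}{2}$, strictly convex just below and strictly concave just above.

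First, I would derive the FOC. For $\Pi_\theta = [0,\theta]\cup\{\bar\theta\}$, the expected payoff under the uniform distribution on $[0,\bar\theta]$ is
\[
W(\theta) = \frac{1}{\bar\theta}\left(\int_0^\theta \nu(t)\,\df t + (\bar\theta-\theta)\,\nu\!\left(\tfrac{\theta+\bar\theta}{2}\right)\right),
\]
and $W'(\theta^\star)=0$ rearranges to exactly \eqref{Eq:UC-a}. Geometrically, this asserts that the tangent line to $\nu$ at the pool mean $m^\star := (\theta^\star+\bar\theta)/2$ passes through $(\theta^\star, \nu(\theta^\star))$. To establish existence of $\theta^\star\in I := (\max\{0,1+\gamma_m-\bar\theta\},\tfrac{1+\gamma_m}{2})$, define $h(\theta) := \nu((\theta+\bar\theta)/2) - \nu(\theta) - \tfrac{\bar\theta-\theta}{2}\,\nu'((\theta+\bar\theta)/2)$ and check its endpoint signs: at the lower endpoint (either $\theta=0$ with $m=\bar\theta/2>\tfrac{1+\gamma_m}{2}$, or $\theta=1+\gamma_m-\bar\theta$ with $m=\tfrac{1+\gamma_m}{2}$), strict convexity of $\nu$ on the separating region gives $h\leq 0$ by a secant-vs-tangent comparison; at $\theta\to\tfrac{1+\gamma_m}{2}$, $h>0$ by strict concavity of $\nu$ on $[\tfrac{1+\gamma_m}{2},m^\star]$. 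The intermediate value theorem yields a root, and uniqueness follows from a standard concavification argument for $S$-shaped functions (the tangent from a convex-side point to the concave side is unique).

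For optimality among all monotone partitions, I would apply Theorem \ref{T:3} with
\[
p_{\Pi^\star}(m) = \begin{cases}\nu(m),& m\in[0,\theta^\star],\\ \nu(m^\star) + \nu'(m^\star)(m-m^\star),& m\in[\theta^\star,\bar\theta].\end{cases}
\]
Convexity of $p_{\Pi^\star}$ follows from convexity of $\nu$ on $[0,\theta^\star]\subset[0,\tfrac{1+\gamma_m}{2}]$ and linearity of the tangent piece, together with the slope comparison $\nu'(\theta^\star)\le\nu'(m^\star)$, which I would derive by chaining convexity on $[\theta^\star,\tfrac{1+\gamma_m}{2}]$, concavity on $[\tfrac{1+\gamma_m}{2},m^\star]$, and the FOC. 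Dominance $p_{\Pi^\star}\ge\nu$ is immediate on $[0,\theta^\star]$; on the pooling region, set $g = \nu - p_{\Pi^\star}$, so the FOC gives $g(\theta^\star)=0$ and tangency gives $g(m^\star)=g'(m^\star)=0$. On $[\tfrac{1+\gamma_m}{2},\bar\theta]$, $g$ is concave with interior critical point $m^\star$ attaining $0$, so $g\le 0$; on $[\theta^\star,\tfrac{1+\gamma_m}{2}]$, $g$ is convex with $g(\theta^\star)=0$ and $g(\tfrac{1+\gamma_m}{2})\le 0$, so $g\le 0$ throughout by the maximum principle for convex functions on intervals.

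The main obstacle is this dominance verification on the pooling region, which must patch the convex and concave pieces of $\nu$ together using the FOC and the double vanishing of $g$ and $g'$ at $m^\star$. A secondary subtlety concerns the boundary behavior: when $\bar\theta < 1+\gamma_m$ the separating region abuts the inflection point and one of the patches is degenerate, while when $\bar\theta \geq 1+\gamma_m$ the corner case $\theta^\star \to 0$ must be ruled out via the sign of $h$ at $\theta=0$; in both cases the argument still goes through.
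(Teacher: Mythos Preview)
Your proposal is correct and follows the same route as the paper: define $p_{\Pi^\star}$ as $\nu$ on $[0,\theta^\star]$ and as the tangent to $\nu$ at the pool mean $m^\star=(\theta^\star+\bar\theta)/2$ on $[\theta^\star,\bar\theta]$, then invoke Theorem~\ref{T:3}. The paper's own proof is terse---it declares both the existence/uniqueness of $\theta^\star$ and the verification that $p_{\Pi^\star}$ is convex and majorizes $\nu$ to be ``straightforward'' and points to the figures---whereas you actually carry out these checks via the $S$-shape of $\nu$ from Lemma~\ref{L:Unim}. One minor imprecision: in your endpoint sign analysis for $h$ at $\theta=0$ (the case $\bar\theta\ge 1+\gamma_m$), the inequality $h(0)\le 0$ does not follow from convexity on the separating region alone; it also uses concavity on $[(1+\gamma_m)/2,\bar\theta]$ (or the direct computation $\nu(m)-m\nu'(m)\le -\E[\gamma]<0$ for $m\ge (1+\gamma_m)/2$). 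This is a phrasing issue, not a gap.
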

\begin{proof}
It is straightforward to show (see Figures~\ref{F:A}(a) and \ref{F:A}(b)) that there exists a unique solution $\theta^\star$ to \eqref{Eq:UC-a} with 
\[
1-\E[\theta]\leq \theta^\star < \frac{1+\gamma_m}{2}<  \frac{\ol \theta+\theta^\star}{2}.
\] We now use Theorem \ref{T:3} in Section \ref{s:LDLP} to verify that $\Pi^\star$ is optimal. For $\Pi^\star=[0,\theta^\star]\cup\{\ol \theta\}$, we have
\[
p_{\Pi^\star}(m) = \begin{cases}
	\nu(m), & \text{if $m <\theta^\star$},\\
	\nu\left(\tfrac{\ol \theta+\theta^\star}{2}\right)+\left(m-\tfrac{\ol \theta+\theta^\star}{2}\right)\nu'\left(\tfrac{\ol \theta+\theta^\star}{2}\right), & \text{if $m\geq\theta^\star$}.
\end{cases}
\]
Figures~\ref{F:A}(a) and \ref{F:A}(b) show $p_{\Pi^\star}$ as a dashed red curve.
It is straightforward to verify that $p_{\Pi^\star}(m)$ is convex and $p_{\Pi^\star}(m)\geq \nu(m)$ for all $m\in[0,\ol \theta]$. Thus, $\Pi^\star$ is optimal by Theorem~\ref{T:3}.
\end{proof}

\subsection{Standard Delegation}\label{s:StdDel}

Consider a delegation problem in which the set of states is $\Theta=[\ul\theta,\ol\theta]$ and the set of decisions is the real line. The principal chooses $\Pi\in {\bf\Pi}(\R)$, where ${\bf\Pi}(\R)$ is the set of all compact subsets of $\R$. Payoffs $U_D$ and $V_D$ satisfy assumptions $(\bar{\rm A}_1)$ and $(\bar{\rm A}_2^D)$. In addition, we assume that

($\bar{\rm A}_3$) $\sup\limits_{\theta\in\Theta}U_D(\theta,x)\to-\infty$  and $\sup\limits_{\theta\in\Theta}V_D(\theta,x)\to-\infty$ as $x\to\pm\infty$.

Note that $(\bar{\rm A}_1)$, $(\bar{\rm A}_2^D)$, and $(\bar{\rm A}_3)$ are satisfied in the linear delegation problem in Section~\ref{s:LDLP}.

We now show that this problem can be formulated as a balanced delegation problem, up to rescaling of the decision, in which the principal chooses $\Pi \in {\bf \Pi}([\ul y,\ol y])$ given by \eqref{E:Pi} for a sufficiently large compact set $[\ul y, \ol y]$.

\begin{lemma}\label{L:3}
There exists an interval $[\ul x,\ol x]$ such that, for each $\ul y< \ul x$ and each $\ol y> \ol x$, 
\[
\max_{\Pi\in {\bf \Pi}([\ul y,\ol y])}\E \big[V_D(\theta,x^*_D(\theta,\Pi))\big]=\max_{\Pi\in{\bf \Pi}(\R)}\E \big[V_D(\theta,x^*_D(\theta,\Pi))\big].
\]
\end{lemma}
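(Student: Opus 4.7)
The goal is to show that when $[\ul y,\ol y]$ is wide enough, restricting the principal to ${\bf\Pi}([\ul y,\ol y])$ is without loss, because sufficiently extreme decisions are uniformly dominated (for both agent and principal) by an interior reference decision---exactly what $(\bar{\rm A}_3)$ delivers.

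First I would use $(\bar{\rm A}_3)$ together with continuity of $U_D(\theta,\cdot)$ and $V_D(\theta,\cdot)$ from $(\bar{\rm A}_1)$ to pick a reference decision $x_0$ and a compact interval $[\ul x,\ol x]$ containing $x_0$ in its interior such that, for every $\theta\in\Theta$ and every $x\notin[\ul x,\ol x]$,
\[
U_D(\theta,x)<U_D(\theta,x_0)\quad\text{and}\quad V_D(\theta,x)<V_D(\theta,x_0).
\]
Such a pair exists because $(\bar{\rm A}_3)$ drives $\sup_\theta U_D(\theta,x)$ and $\sup_\theta V_D(\theta,x)$ below any prescribed interior benchmark as $|x|\to\infty$, while continuity on the compact $\Theta$ makes $\inf_\theta U_D(\theta,x_0)$ and $\inf_\theta V_D(\theta,x_0)$ finite for an appropriate $x_0$. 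Strict single-peakedness of $U_D(\theta,\cdot)$ implicit in $(\bar{\rm A}_2^D)$, combined with this bound, also places the agent's ideal point in $(\ul x,\ol x)$ for every $\theta$.

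The inclusion ${\bf\Pi}([\ul y,\ol y])\subset{\bf\Pi}(\R)$ yields the direction ``$\le$''. For ``$\ge$'', given any $\Pi\in{\bf\Pi}(\R)$ I pass to $\Pi'=(\Pi\cap[\ul y,\ol y])\cup\{\ul y,\ol y\}\in{\bf\Pi}([\ul y,\ol y])$ and argue that, for a.e.\ $\theta$, some optimal selection from $\Pi'$ yields a principal's payoff weakly above some optimal selection from $\Pi$. Single-peakedness of $U_D$ with peak in $(\ul x,\ol x)$ gives two key facts: any $x<\ul y$ in $\Pi$ is strictly dominated for the agent by $\ul y$ (and symmetrically any $x>\ol y$ by $\ol y$), so truncating $\Pi$ to $[\ul y,\ol y]$ and adjoining $\{\ul y,\ol y\}$ weakly raises the agent's utility; and the adjoined endpoints $\ul y,\ol y$ themselves are strictly dominated for the agent by any decision in $(\ul x,\ol x)\cap\Pi$, so they do not displace such interior choices. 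Translating these agent-side comparisons into principal-side comparisons via the uniform bound on $V_D$ from the first step delivers the desired pointwise inequality, which integrates to the claim.

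\textbf{Main obstacle.} The delicate case is $x^*_D(\theta,\Pi)\in[\ul y,\ol y]\setminus[\ul x,\ol x]$: the original choice persists in $\Pi'$, but the adjoined endpoints could tempt the agent to switch, and since $V_D$ is not assumed single-peaked, the principal's payoff after such a switch is not automatically controlled. I would resolve this by splitting on whether the agent's new choice in $\Pi'$ is $\ul y$ or $\ol y$, using single-peakedness of $U_D$ to identify it as the endpoint closer to the agent's ideal point, and then appealing to the uniform bound $V_D(\theta,x)<V_D(\theta,x_0)$ outside $[\ul x,\ol x]$ to show that the principal does not lose.
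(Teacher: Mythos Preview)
Your resolution of the ``main obstacle'' does not close the argument. In the delicate case $x^*:=x^*_D(\theta,\Pi)\in[\ul y,\ol y]\setminus[\ul x,\ol x]$, suppose the agent switches in $\Pi'$ to the adjoined endpoint $\ul y$. You then need $V_D(\theta,\ul y)\ge V_D(\theta,x^*)$, but your uniform bound only tells you that \emph{both} of these quantities lie below $V_D(\theta,x_0)$; it does not compare them to each other. Concretely, take $\Theta=[0,1]$, $U_D(\theta,x)=-(x-\theta)^2$, $V_D(\theta,x)=-(x-10)^2$, $x_0=\tfrac12$, $[\ul x,\ol x]=[-\tfrac12,\tfrac{39}{2}]$, $\ul y=-1$, $\ol y=20$, and $\Pi=\{19.8\}$. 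All of your Step~1 bounds hold and the agent's ideal lies in $(\ul x,\ol x)$, yet $x^*=19.8$ sits in the delicate region, and in $\Pi'=\{-1,19.8,20\}$ the agent at $\theta=\tfrac12$ switches to $-1$, dropping the principal's payoff from $-(9.8)^2$ to $-(11)^2$. So the pointwise claim that $\Pi'$ weakly dominates $\Pi$ for the principal is false for general $\Pi$.

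The paper sidesteps this by never comparing an arbitrary $\Pi$ to its truncation. It first fixes a reference value $V_0=\E[V_D(\theta,z_0)]$ and observes that any $\Pi$ with expected payoff at least $V_0$ must contain a decision in the bounded set $[\ul z,\ol z]$ where $\sup_\theta V_D(\theta,\cdot)\ge V_0$. It then takes the target interval to be the set $X$ of decisions \emph{not} dominated for the agent by every point of $[\ul z,\ol z]$; this set is bounded by $(\bar{\rm A}_3)$. For any $\Pi$ that meets $[\ul z,\ol z]$, adding or deleting decisions outside $X$ leaves the agent's choice \emph{unchanged}, so the truncated set yields exactly the same expected payoff and the delicate case never arises. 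The idea missing from your plan is this two-layer construction: first isolate a principal-good region that any near-optimal $\Pi$ must hit, and only then define agent-domination relative to that region. (A smaller gap: you invoke continuity of $U_D,V_D$ in $\theta$ to bound $\inf_\theta U_D(\theta,x_0)$, but $(\bar{\rm A}_1)$ provides no such continuity.)
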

\begin{proof}
Consider $z_0\in \R$. Let $V_0$ be the principal's expected payoff if $\Pi=\{z_0\}$,
\[
V_0=\E[V_D(\theta,z_0)].
\]
Let 
\[
Z=\left\{x\in\R: \sup_{\theta\in\Theta} V_D(\theta, x)\ge V_0\right\}.
\]
Note that $Z$ is nonempty because  $z_0\in Z$, and it is bounded by $(\bar{\rm A}_3)$. Let $\ul z =\inf Z$ and $\ol z = \sup Z$.
Clearly, for each $\Pi\in{\bf \Pi}(\R)$,
\[
\Pi\cap [\ul z,\ol z] =\varnothing \implies \E\big[V_D(\theta,x^*_D(\theta,\Pi))\big]<V_0,
\]
so an optimal $\Pi^*\in{\bf \Pi}(\R)$ must have a nonempty intersection with $[\ul z,\ol z]$.

Next, we say that $y\in \R$ {\it is dominated by} $[\ul z,\ol z]$ if the agent strictly prefers any decision in $[\ul z,\ol z]$ to $y$,
\[
\min_{z\in [\ul z,\ol z]} U_D(\theta,z)-U_D(\theta,y)>0 \ \ \text{for all $\theta\in\Theta$}.
\]
By $(\bar{\rm A}_1)$ and $(\bar{\rm A}_2^D$, downcrossing in $x$),
\beq\label{E:MIT0}
\min_{z\in[\ul z,\ol z]} U_D(\theta,z)=\min\{U_D(\theta,\ul z),U_D(\theta,\ol z)\}.
\eeq
By $(\bar{\rm A}_1)$ and $(\bar{\rm A}_2^D$, aggregate upcrossing in $\theta$),
\[
U_D(\theta,z)-U_D(\theta,y)=\int_{y}^{z} \frac{\partial}{\partial x}U_D(\theta,x)\df x>0 \ \ \text{for all $\theta\in[\ul\theta,\ol \theta]$}
\]
if and only if
\beq\label{E:MIT1}
\begin{cases}
U_D(\ul \theta,z)-U_D(\ul \theta,y)>0 &  \ \ \text{if $z>y$},\\
U_D(\ol \theta,z)-U_D(\ol \theta,y)>0 & \ \  \text{if $z<y$}.
\end{cases}
\eeq

Let $X$ be the set of all $x\in \R$ that are not dominated by $[\ul z,\ol z]$. If $x\in [\ul z,\ol z]$, then, trivially, $x\in X$. If $x<\ul z$, then, by \eqref{E:MIT0} and \eqref{E:MIT1},
\[
x\in X \iff U_D(\ul \theta,x)\ge \min\{U_D(\ul \theta,\ul z),U_D(\ul \theta,\ol z)\}.
\]
If $x>\ol z$, then, by \eqref{E:MIT0} and \eqref{E:MIT1},
\[
x\in X \iff U_D(\ol \theta,x)\ge \min\{U_D(\ol \theta,\ul z),U_D(\ol \theta,\ol z)\}.
\]
Thus, $[\ul z,\ol z]\subset X$ and, by $(\bar{\rm A}_3)$, $X$ is bounded. 

So, we have obtained that (i) if $\Pi^*$ is optimal, then it has a nonempty intersection with $[\ul z,\ol z]$, and (ii) any decision $x\not\in X$ is dominated by $[\ul z,\ol z]$. Given that $\Pi^*\cap [\ul z,\ol z]\neq\varnothing$, adding or removing any decisions outside of $X$ does not affect the agent's behavior, 
\[
x^*_D(\theta,\Pi^*)=x^*_D(\theta,\Pi^*\cup Y)=x^*_D(\theta,\Pi^*\backslash Y) \ \ \text{for all $\theta$ and $Y\subset \R\backslash X$}.
\]
Hence, for each $\ul y<\min X$ and each $\ol y>\max X$,
\[
\E\big[V_D(\theta,x^*_D(\theta,\Pi^*))\big] = \E\big[V_D(\theta,x^*_D(\theta,\Pi^*\cap [\ul y,\ol y]\cup\{\ul y,\ol y\}))\big].\qedhere
\]
\end{proof}

\subsection{Proof of Proposition \ref{P:D}.}\hfill

$1.$ For $\Pi^*=\{x^*,\ul y, \ol y\}$, we have
\[
p_{\Pi^*}(m) = 
\begin{cases}
\nu(m^*_L)+\nu'(m^*_L) (m - m^*_L), &\text{if $m< c(x^*)$},\\
\nu(m^*_H) + \nu'(m^*_H) (m - m^*_H), &\text{if $m\geq c(x^*)$},
\end{cases}
\]
where
\begin{align*}
	m^*_L = \E [c(s)|s< x^*] \quad\text{and}\quad	m^*_H &= \E [c(s)|s\ge x^*].
\end{align*}
As shown in Appendix \ref{s:StdDel}, $\ul y$ is sufficiently small and $\ol y$ is sufficiently large, such that
\begin{align*}
	m^*_L &=\frac{\int_{\ul y}^{x^*}c(s)\df s}{x^*- \ul y}  = \frac{\int_{\ul y}^{x'}c(s)\df s + \int_{x'}^{x^*}c(s)\df s}{x'-\ul y + x^* - x'}<0,\\
	m^*_H &=\frac{\int_{x^*}^{\ol y}c(s)\df s}{\ol y - x^*} = \frac{\int_{x^*}^{x''}c(s)\df s + \int_{x''}^{\ol y}c(s)\df s}{x'' - x^* + \ol y - x''}>1,
\end{align*}
where the first inequality holds because $c(x)<0$ for $x< x'$ and $\ul y\ll x'$ and the second inequality holds because $c(x)>1$ for $x> x''$ and $\ol y\gg x''$. 

Thus, taking into account \eqref{Eq:VM-6}, we have
\[
p_{\Pi^*}(m) = 
\begin{cases}
0, &\text{if $m< c(x^*)$},\\
\nu(1) + m - 1, &\text{if $m\geq c(x^*)$}. 
\end{cases}
\]
Figure~\ref {f:3}(a) shows $p_{\Pi^*}$ as a dashed red curve. The conditions imply that $p_{\Pi^*}$ is convex and $p_{\Pi^*}\geq \nu$; so $\Pi^*$ is optimal by Theorem~\ref{T:3}.

$2.$ For $\Pi^*=[x^*_L,x^*_H]\cup \{\ul y,\ol y\}$, we have
\[
p_{\Pi^*}(m) = 
\begin{cases}
\nu(m^*_L)+\nu'(m^*_L) (m - m^*_L), &\text{if $m< c(x^*_L)$},\\
\nu(m), &\text{if $m\in [c(x^*_L),c(x^*_H))$},\\
\nu(m^*_H) + \nu'(m^*_H) (m - m^*_H), &\text{if $m\geq c(x^*_H)$},
\end{cases}
\]
where
\begin{align*}
	m^*_L = \E [c(s)|s< x^*_L] \quad\text{and}\quad m^*_H = \E [c(s)|s\ge x^*_H].
\end{align*}
As shown in Appendix \ref{s:StdDel}, $\ul y$ is sufficiently small and $\ol y$ is sufficiently large, such that
\begin{align*}
m^*_L &=\frac{\int_{\ul y}^{x^*_L}c(s)\df s}{x^*_L- \ul y} = \frac{\int_{\ul y}^{x'}c(s)\df s + \int_{x'}^{x^*_L}c(s)\df s}{x'-\ul y + x^*_L - x'}<0,\\
	m^*_H &=\frac{\int_{x^*_H}^{\ol y}c(s)\df s}{\ol y - x^*_H} = \frac{\int_{x^*_H}^{x''}c(s)\df s + \int_{x''}^{\ol y}c(s)\df s}{x'' - x^*_H + \ol y - x''}>1,
\end{align*}
where the first inequality holds because $c(x)<0$ for $x< x'$ and $\ul y\ll x'$ and the second inequality holds because $c(x)>1$ for $x> x''$ and $\ol y\gg x''$. 

Thus, taking into account \eqref{Eq:VM-6}, we have
\[
p_{\Pi^*}(m) = 
\begin{cases}
0, &\text{if $m< c(x^*_L)$},\\
\nu(m), &\text{if $m\in [c(x^*_L),c(x^*_H))$},\\
\nu(1) + m - 1, &\text{if $m\geq c(x^*_L)$}. 
\end{cases}
\]
Figure~\ref{f:3}(b) shows $p_{\Pi^*}$ as a dashed red curve. The conditions imply that $p_{\Pi^*}$ is convex and $p_{\Pi^*}\geq \nu$; so $\Pi^*$ is optimal by Theorem~\ref{T:3}. In particular, if $c(x_L^*)\in (0,1)$, then $p_{\Pi^*}(m)$ is convex at $m=c(x_L^*)$ because $\nu(m)$ is differentiable at all $m\in (0,1)$, by \eqref{Eq:VM-6}. Moreover, if $c(x_L^*)= 0$, then $p_{\Pi^*}(m)$ is convex at $m=0$ because $\nu(m)$ is convex at $m=0$, by the last line in the conditions. By the same argument, $p_{\Pi^*}(m)$ is convex at $m=c(x_H^*)$.

$3.$ For $\Pi^*=[\ul y,x^*_L]\cup [x^*_H, \ol y]$, we have
\[
p_{\Pi^*}(m) = 
\begin{cases}
\nu(m), &\text{if $m< c(x^*_L)$},\\
\nu(m^*)+\nu'(m^*)(m-m^*), &\text{if $m\in [c(x^*_L),c(x^*_H))$},\\
\nu(m), &\text{if $m\geq c(x^*_H)$}.
\end{cases}
\]
Figure~\ref{f:3}(c) shows $p_{\Pi^*}$ as a dashed red curve. The conditions imply that $p_{\Pi^*}$ is convex and $p_{\Pi^*}\geq \nu$; so $\Pi^*$ is optimal by Theorem~\ref{T:3}. As in part~2, the last line in the conditions verifies that $p_{\Pi^*}(m)$ is convex at $m=c(x^*_L)$ and at $m=c(x^*_H)$ if $\nu(m)$ is not differentiable at these points.

\subsection{Proof of Proposition \ref{P:Infl}.}
The proof is straightforward but tedious. We only summarize possible cases. The reader may refer to the corresponding figures for guidance.

$1.$ There are 4 cases (see Figure~\ref{f:4}).

$(a)$ If $\nu'(0+)\geq 0$ and $\nu'(1-)\leq 1$, then $\Pi^*=[\ul y, \ol y]$. 

$(b)$ If $\nu'(0+)\geq 0$ and $\nu'(1-)> 1$, then $\Pi^*=[\ul y, x^*_H]\cup \{\ol y\}$ with $c(x^*_H)\leq 1$. 

$(c)$ If $\nu'(0+) < 0$ and $\nu'(1-)\leq 1$, then $\Pi^*=[x^*_L,\ol y]\cup \{\ul y\}$ with $c(x^*_L)\geq 0$.

$(d)$ If $\nu'(0+) < 0$ and $\nu'(1-) > 1$, then $\Pi^*=[x^*_L,x^*_H]\cup \{\ul y, \ol y\}$ with $0\leq c(x^*_L)\leq c(x^*_H)\leq 1$.

$2.$ There are 3 cases (see Figure~\ref{f:5}).

$(a)$ If $\nu'(0+)> 0$ and $\nu'(1-)< 1$, then $\Pi^*=\{x^*_L,x^*_H,\ul y, \ol y\}$ with $c(x^*_L)\leq 0$ and $c(x^*_H) \geq 1$. 

$(b)$ If $\nu'(0+)\leq 0$, then $\Pi^*=\{x^*_H,\ul y, \ol y\}$ with $c(x^*_H)\geq 1$. 

$(c)$ If $\nu'(1-)\geq 1$, then $\Pi^*=\{x^*_L, \ul y, \ol y\}$ with $c(x^*_L) \leq 0$. 

\begin{figure}
\begin{tabular}{cc}
\includegraphics[width=170pt]{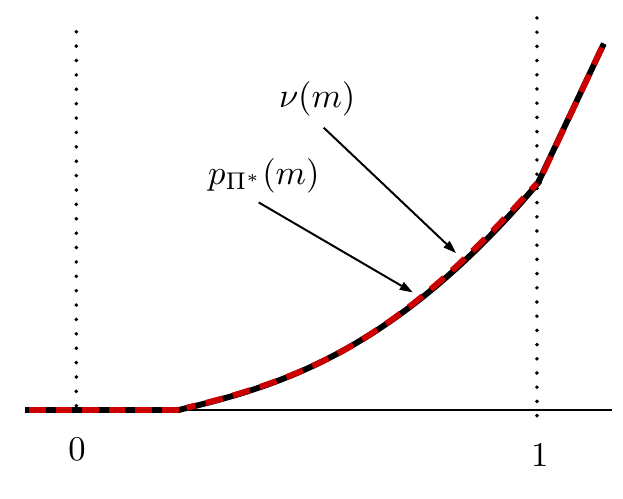} & \includegraphics[width=170pt]{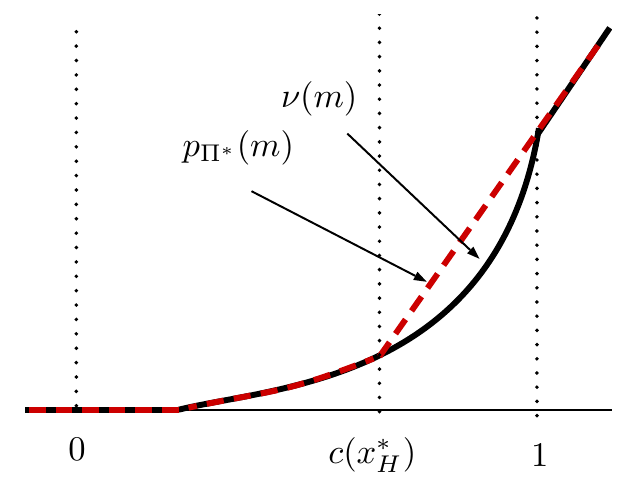}\\
\footnotesize{(a)}&\footnotesize{(b)}\\ \ &\\
\includegraphics[width=170pt]{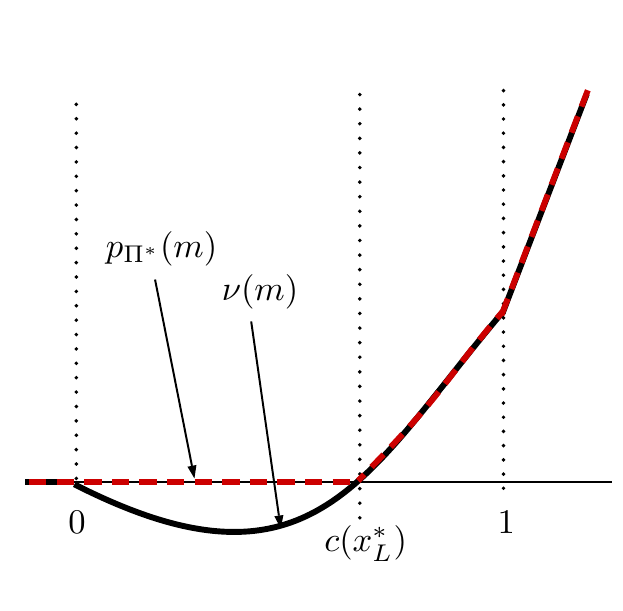} & \includegraphics[width=170pt]{fm2.pdf}\\
\footnotesize{(c)}&\footnotesize{(d)}\\ \ & \\
\end{tabular}
\caption{Four Cases in Proposition \ref{P:Infl} (Part 1)}
\label{f:4}
\end{figure}

\begin{figure}
\begin{tabular}{ccc}
\hspace*{-0.3in}\includegraphics[width=150pt]{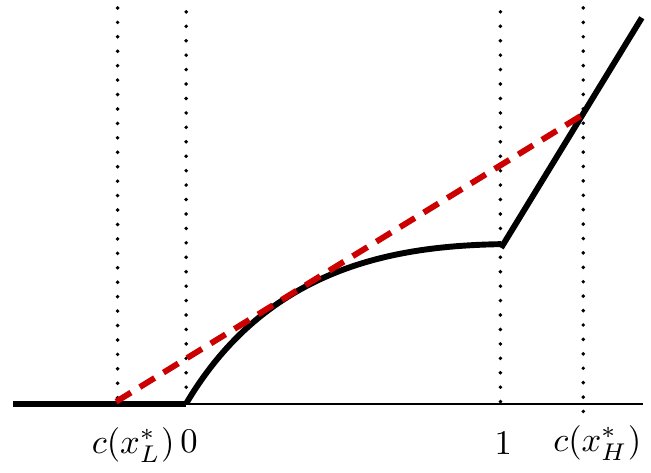} & \hspace*{-0.1in}\includegraphics[width=150pt]{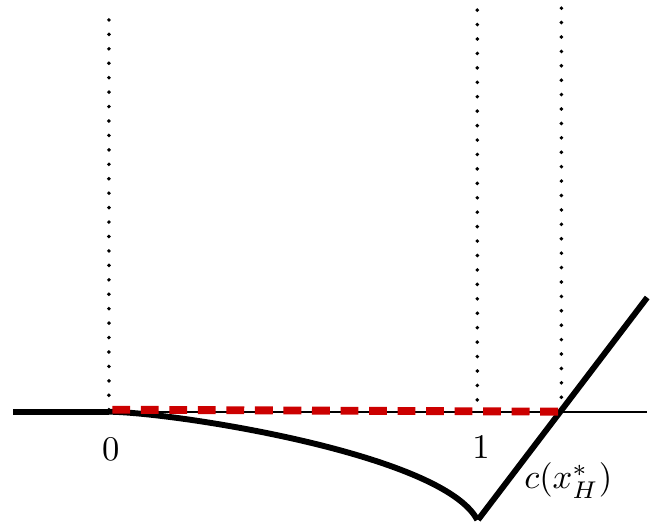} & \hspace*{-0.1in}\includegraphics[width=150pt]{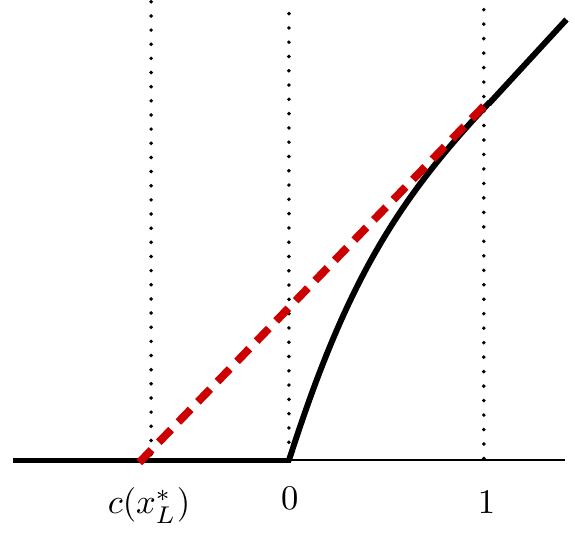}\\
\footnotesize{(a)}&\footnotesize{(b)} &\footnotesize{(c)}
\end{tabular}
\caption{Three Cases in Proposition \ref{P:Infl} (Part 2)}
\label{f:5}
\end{figure}

\begin{figure}
\begin{tabular}{cc}
\includegraphics[width=200pt]{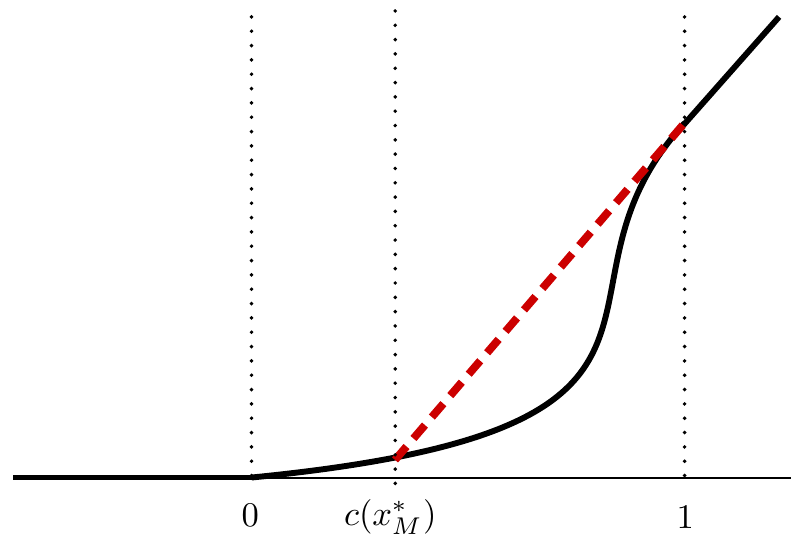} & \includegraphics[width=200pt]{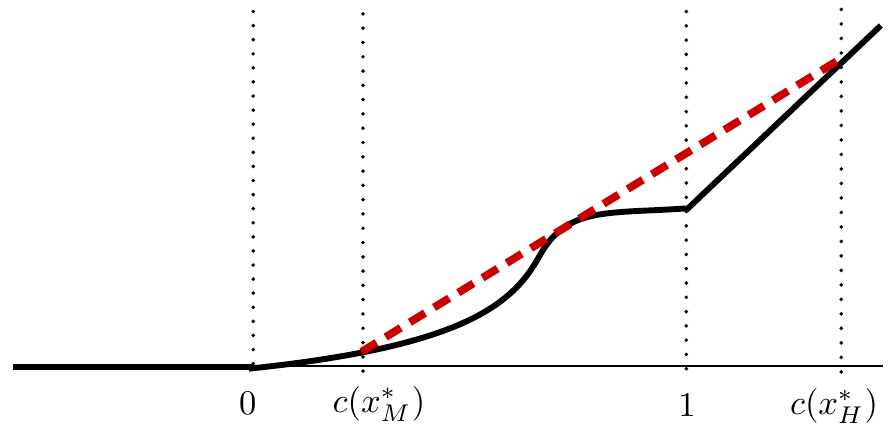}\\
\footnotesize{(a)}&\footnotesize{(b)}\\ \ &\\
\includegraphics[width=200pt]{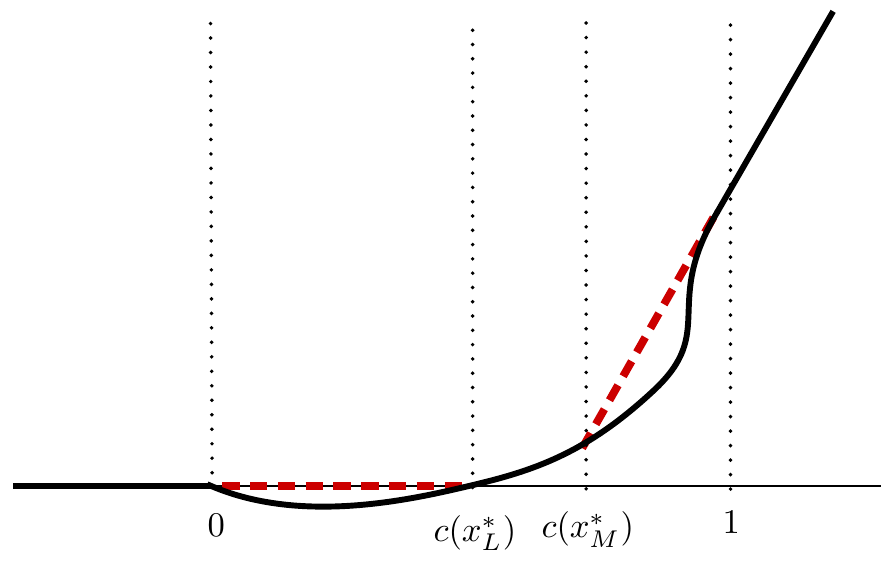} & \includegraphics[width=200pt]{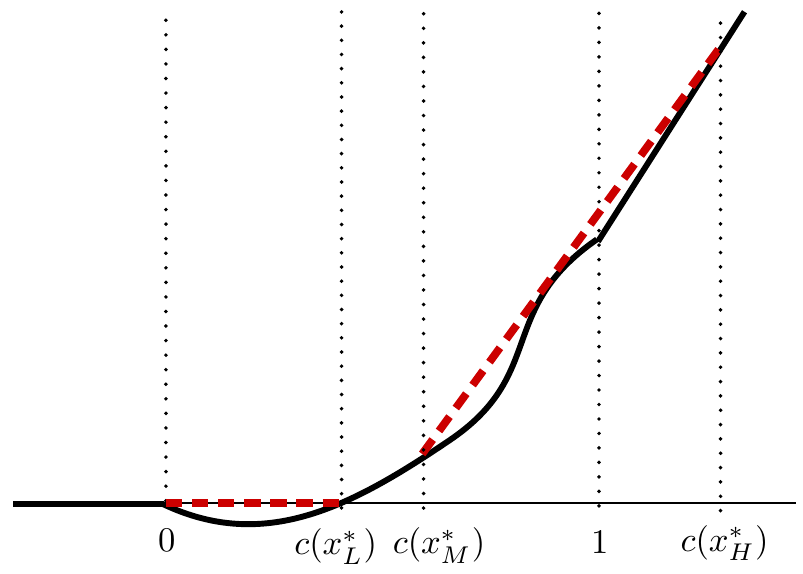}\\
\footnotesize{(c)}&\footnotesize{(d)}
\end{tabular}
\caption{Four Cases in Proposition \ref{P:Infl} (Part 3)}
\label{f:6}
\end{figure}

$3.$ There are 4 cases (see Figure~\ref{f:6}).

$(a)$ If $\nu'(0+)\geq 0$ and $\nu'(1-) \geq 1$, then $\Pi^*=[\ul y, x^*_M]\cup \{\ol y\}$ with $c(x^*_M)\leq\tilde m$. 

$(b)$ If $\nu'(0+)\geq 0$ and $\nu'(1-) < 1$, then $\Pi^*=[\ul y, x^*_M]\cup \{x^*_H,\ol y\}$ with $c(x^*_M)\leq \tilde m$ and $c(x^*_H)\geq 1 $.

$(c)$ If $\nu'(0+)< 0$ and $\nu'(1-) \geq 1$, then $\Pi^*=[x^*_L,x^*_M]\cup \{\ul y, \ol y \}$ with $0\leq c(x^*_L)\leq c(x^*_M)\leq \tilde m$.

$(d)$ If $\nu'(0+)< 0$ and $\nu'(1-) < 1$, then $\Pi^*=[x^*_L, x^*_M]\cup \{x^*_H,\ul y,\ol y\}$ with $0\leq c(x^*_L)\leq c(x^*_M)\leq\tilde m$ and $c(x^*_H)\geq 1$.

$4.$ There are 4 cases analogous to those in part 3.

$(a)$ If $\nu'(0+)\leq 0$ and $\nu'(1-) \geq 1$, then $\Pi^*=[x^*_M, \ol y]\cup \{\ul y\}$ with $c(x^*_M)\geq\tilde m$.

$(b)$ If $\nu'(0+)> 0$ and $\nu'(1-) \geq 1$, then $\Pi^*=[x^*_M, \ol y]\cup \{x^*_L,\ul y\}$ with $c(x^*_M)\geq\tilde m$ and $c(x^*_L)\leq 0$.

$(c)$ If $\nu'(0+)\leq 0$ and $\nu'(1-) > 1$, then $\Pi^*=[x^*_M,x^*_H]\cup \{\ul y, \ol y \}$ with $1\geq c(x^*_H)\geq c(x^*_M)\geq \tilde m$.

$(d)$ If $\nu'(0+)> 0$ and $\nu'(1-) < 1$, then $\Pi^*=[x^*_M, x^*_H]\cup \{x^*_L,\ul y,\ol y\}$ with $1\geq c(x^*_H)\geq c(x^*_M)\geq\tilde m$ and $c(x^*_L)\leq 0$.

\subsection{Proof of Proposition \ref{P:7}}
For $\Pi^*=\{x_0 \}\cup [x^*, \ol y]$, we have
\[
p_{\Pi^*}(m) = 
\begin{cases}
\nu(m^*)+\nu'(m^*)(m-m^*), &\text{if $m\in [c(x_0),c(x^*))$},\\
\nu(m), &\text{if $m\geq c(x^*)$}.
\end{cases}
\]
The conditions imply that $p_{\Pi^*}$ is convex and $p_{\Pi^*}\geq \nu$; so $\Pi^*$ is optimal by Theorem~\ref{T:3}. As in Proposition~\ref{P:D}, the last line in the conditions verifies that $p_{\Pi^*}(m)$ is convex at $m=c(x^*)$ if $\nu(m)$ is not differentiable at this point.

\subsection{Proof of Theorem \ref{T:2}}
Consider $(U_D,V_D)\in\bar{\mathcal P}_D$ and $(U_P,V_P)\in\bar{\mathcal P}_P$ that satisfy 
\begin{align}
\left.\frac{\partial U_D(t,x)}{\partial x}\right|_{x=s} = - \left.\frac{\partial U_P(s,x)}{\partial x}\right|_{x=t} \ \ &\text{and} \ \ \left.\frac{\partial V_D(t,x)}{\partial x}\right|_{x=s}=-\left.\frac{\partial V_P(s,x)}{\partial x}\right|_{x=t}\label{E:Og1}
\end{align}
for all $s,t\in[0,1]$. 
It suffices to prove that there exists a constant $\beta \in \R$ such that 
\[
\E\big[V_D(\theta,x^*_D(\theta,\Pi))\big]=\E\big[V_P(\theta,x^*_P(\theta,\Pi))\big]+\beta\text{ for all $\Pi\in\bf \Pi$}.
\]

Consider $\Pi\in \bf\Pi$ and let $s$ be uniformly distributed on $[0,1]$. Define
\begin{align*}
&u_\Pi(s,t)=\E\left[\left.\frac{\partial U_P(s',t)}{\partial t}\right|s'\in\mu_\Pi(s)\right]=\E\left[\left.-\frac{\partial U_D(t,s')}{\partial s'}\right|s'\in\mu_\Pi(s)\right],\\
&v_\Pi(s,t)=\E\left[\left.\frac{\partial V_P(s',t)}{\partial t}\right|s'\in\mu_\Pi(s)\right]=\E\left[\left.-\frac{\partial V_D(t,s')}{\partial s'}\right|s'\in\mu_\Pi(s)\right].
\end{align*}
Note that $u_\Pi(s,t)$ is integrable in $s$ and $t$ by ($\bar{\rm A}_1$) and satisfies upcrossing in $s$ and downcrossing in $t$ by ($\bar{\rm A}_2^D$)/($\bar{\rm A}_2^P$).

First, consider the balanced delegation problem. By  \eqref{E:Og1}, we have, for $s\in \Pi$,
\begin{align*}
U_D(t,s)&=U_D(t,1)-\int_{s}^1 \left.\frac{\partial U_D(t,x)}{\partial x}\right|_{x=s'}\df s'=U_D(t,1)+\int_{s}^1 u_\Pi(s',t)\df s',\\
V_D(t,s)&=V_D(t,1)-\int_{s}^1 \left.\frac{\partial V_D(t,x)}{\partial x}\right|_{x=s'}\df s'=V_D(t,1)+\int_{s}^1 v_\Pi(s',t)\df s'.
\end{align*}

Since $u_\Pi(s,t)$ satisfies upcrossing in $s$, we have 
\[
s\in y^*(t,\Pi)=\argmax_{y\in \Pi}\int_{y}^1 u_\Pi(s,t)\df s
\]
if and only if for all $s',s''\in \Pi$  such that $s'\le s\leq s''$ we have $u_\Pi(s',t)\leq 0 \leq u_\Pi(s'',t)$.

The principal's expected payoff is
\begin{align*}
\E\big[V_D(t,x^*_D(t,\Pi))\big]&=\E[V_D(t,1)]+\E[V_D(t,y^*(t,\Pi))]\\
&=\E[V_D(t,1)]+\int_0^1\int_{y^*(t,\Pi)}^1 v_\Pi(s,t)\df s\df t.
\end{align*}

Define
\begin{gather*}
J_\Pi=\{(s,t)\in[0,1]^2: u_\Pi(s,t)\ge 0\} \quad\text{and}\quad J_\Pi^+=\{(s,t)\in[0,1]^2: u_\Pi(s,t)>0\}, \\
\mathcal J_\Pi=\{J\subset[0,1]^2:  J_\Pi^+\subset J\subset J_\Pi, \ \text{and $J$ is Lebesgue measurable}\}.
\end{gather*}

Using \citename{Aumann} (\citeyear*{Aumann}, Theorem 1), we have
\[
\E\big[V_D(t,x^*_D(t,\Pi))\big]=\E[V_D(t,1)]+\left\{\int_{(s,t)\in J} v_\Pi(s,t)\df s\df t: J\in\mathcal J_\Pi\right\}.
\]

Next, consider the monotone persuasion problem. By  \eqref{E:Og1}, we have
\begin{align*}
\E\big [U_P(s',t) \big| s'\in \mu_\Pi(s)\big]&=\E\big [U_P(s',0) \big| s'\in \mu_\Pi(s)\big]+
\int_0^t\E\left[\left.\frac{\partial U_P(s',t)}{\partial t} \right| s'\in \mu_\Pi(s)\right]\df t'\\
&=\E\big [U_P(s',0) \big| s'\in \mu_\Pi(s)\big]+\int_0^t u_\Pi(s,t')\df t'
\end{align*}
and
\begin{align*}
\E\big [V_P(s',t) \big| s'\in \mu_\Pi(s)\big]&=\E\big [V_P(s',0) \big| s'\in \mu_\Pi(s)\big]+
\int_0^t\E\left[\left.\frac{\partial V_P(s',t)}{\partial t} \right| s'\in \mu_\Pi(s)\right]\df t'\\
&=\E\big [V_P(s',0) \big| s'\in \mu_\Pi(s)\big]+\int_0^t v_\Pi(s,t')\df t'.
\end{align*}

Since $u_\Pi(s,t)$ satisfies downcrossing in $t$, we have
\[
t\in z^*(s,\Pi)=\argmax_{z\in[0,1]}\int_0^z u_\Pi(s,t)\df t
\]
if and only if for all $t',t''\in [0,1]$  such that $t'\le t\leq t''$ we have $u_\Pi(s,t')\geq 0 \geq u_\Pi(s,t'')$. 

The principal's expected payoff is
\begin{align*}
\E[V_P(s,x^*_P(s,\Pi))]&=\E\big [V_P(s,0) \big]+\E[V_P(s,z^*(s,\Pi))]\\
&=\E[V_P(s,0)]+\int_{0}^1\int_0^{z^*(s,\Pi)} v_\Pi(s,t')\df t'\df s\\
&=\E[V_P(s,0)]+\left\{\int_{(s,t)\in J} v_\Pi(s,t)\df s\df t: J\in\mathcal J_\Pi\right\}.
\end{align*}
We thus have shown that $\E\big[V_D(\theta,x^*_D(\theta,\Pi))\big]=\E[V_P(\theta,x^*_P(\theta,\Pi))]+\beta$ for each $\Pi\in\bf\Pi$, where $\beta=\E[V_D(t,1)]-\E[V_P(s,0)]$.

Finally, it is straightforward to verify that if $(U_D,V_D)\in\bar{\mathcal P}_D$ and $(U_P,V_P)$ is given by \eqref{E:DtoP}, then $(U_P,V_P)$ satisfies ($\bar{\rm A}_1$) and ($\bar{\rm A}_2^P$), and thus it is in $\bar{\mathcal P}_P$. Conversely, if $(U_P,V_P)\in\bar{\mathcal P}_P$ and $(U_D,V_D)$ is given by \eqref{E:PtoD}, then $(U_D,V_D)$ satisfies ($\bar{\rm A}_1$) and ($\bar{\rm A}_2^D$), and thus it is in $\bar{\mathcal P}_D$.

\bibliographystyle{econometrica}
\bibliography{persuasionlit}

\end{document}